\documentclass[12pt,technote, onecolumn, draft]{IEEEtran}
\usepackage[utf8]{inputenc}
\usepackage{latexsym}
\usepackage{mathrsfs}
\usepackage{graphicx}
\usepackage{multirow}
\usepackage{amsfonts,amssymb,amsmath,amsthm,bm}
\usepackage{color}
\usepackage{booktabs}
\usepackage{cite}
\newtheorem{theorem}{Theorem}[section] 
% 这句定义使得definition环境和theorem共享编号
\newtheorem{definition}[theorem]{Definition} 
\newtheorem{lemma}[theorem]{Lemma} 
% 这句定义使得lemma环境和theorem共享编号
\newtheorem{corollary}[theorem]{Corollary}

\newtheorem{proposition}[theorem]{Proposition}

%---------------------------------------------------------------------------

\begin{document}
	\title{A New Family of Binary Sequences via Elliptic Function Fields over Finite Fields of Odd Characteristics\thanks{Xiaofeng Liu and Fang-Wei Fu were supported by the National Key Research and Development Program of China (Grant Nos.  2022YFA1005000), the National Natural Science Foundation of China (Grant Nos. 12141108, 61971243), the Fundamental Research Funds for the Central Universities of China (Nankai University), and the Nankai Zhide Foundation. Jun Zhang was supported by the National Natural Science Foundation of China under Grant No. 12441105.}}
	\author{Xiaofeng Liu, Jun Zhang, Fang-Wei Fu
		\IEEEcompsocitemizethanks{\IEEEcompsocthanksitem Xiaofeng Liu and Fang-Wei Fu are with the Chern Institute of Mathematics and LPMC, Nankai University, Tianjin 300071, China, Emails: lxfhah@mail.nankai.edu.cn, fwfu@nankai.edu.cn. Jun Zhang is with the School of Mathematical Sciences, Capital Normal University, Beijing 100048, China, Email: junz@cnu.edu.cn.
		}
	}{\tiny }
    \maketitle

	\begin{abstract}
		 Motivated by the construction of binary sequences by utilizing the cyclic elliptic function fields over the finite field $\mathbb{F}_{2^{n}}$ by Jin \textit{et al.} in [IEEE Trans. Inf. Theory 71(8), 2025], we extend the construction to the cyclic elliptic function fields with odd characteristics by using the quadratic residue map $\eta$ instead of the trace map used therein. For any cyclic elliptic function field with $q+1+t$ rational points and any positive integer $d$ with $\gcd(d, q+1+t)=1$, we construct a new family of binary sequences of length $q+1+t$, size $q^{d-1}-1$, balance upper bounded by $(d+1)\cdot\lfloor2\sqrt{q}\rfloor+|t|+d,$ the correlation upper bounded by $(2d+1)\cdot\lfloor2\sqrt{q}\rfloor+|t|+2d$ and the linear complexity lower bounded by  $\frac{q+1+2t-d-(d+1)\cdot\lfloor2\sqrt{q}\rfloor}{d+d\cdot\lfloor2\sqrt{q}\rfloor}$, where $\lfloor x\rfloor$ stands for the integer part of $x\in\mathbb{R}$. 
         \end{abstract}

	\begin{IEEEkeywords}
		Binary sequences, Correlation, Linear complexity, Cyclic elliptic function fields, Kummer extensions, Hurwitz genus formula.
	\end{IEEEkeywords}
	
	%-------------------------------------------------------------------------------
	\section{Introduction}
	\label{sec:1}
  Binary sequences with low correlations (auto-correlation and cross-correlation) and large linear complexities are widely applied in digital communication, cryptography, and signal processing. Several renowned constructions of binary sequences with good parameters have been obtained via different methods over the past decades. The oldest family of binary sequences, known as Gold sequence, was constructed in 1967. Such family of sequences has length $2^{n}-1$, size $2^{n}+1$ for odd integer $n$. The construction methods of Gold sequences can be found in \cite{13,14}. Gong constructed a family of binary sequences with size $2^{n}-1$, length $(2^{n}-1)^{2}$ and correlation bounded by $3+2(2^{n}-1)$ in 2002; see \cite{16}. Zhou and Tang obtained a family of binary sequences with length $2^{n}-1$ for $n=2m+1$, size $2^{\ell n}+\cdots+2^{n}-1$ and correlation $2^{m+\ell}+1$ for each $1\leq \ell\leq m$; see \cite{15}. Trace Norm sequences with good correlation properties were proposed in 1995; see \cite{27}.
 Binary sequences with lengths $2(2^{n}-1)$ and $(2^{n}-1)^{2}$ were constructed in 2007; see \cite{17}. The cyclic structure of the multiplicative group $\mathbb{F}_{2^{n}}^{*}$ was also applied in the design of binary sequences; see \cite{16,18} \textit{etc}. Several constructions of binary sequences over finite fields with odd characteristics were also proposed. In 2006, a family of binary sequences, now known as Weil sequences, with length an odd prime $p$, family size $(p-1)/2$, and correlation bounded by $5+2\sqrt{p}$ was constructed; see \cite{6}. Paterson proposed a family of pseudorandom binary sequences with length $p^{2}, p\equiv 3$ (mod $4$) and $p$ a prime; see \cite{29}. The construction employed Hadamard difference sets and maximum distance separable (MDS) codes.
  In \cite{23}, Jin \textit{et al.} constructed four families of binary sequences with flexibility on length while still possessing low correlation. 
  \begin{table}[htbp]  % 尝试将表格放置在合适的位置
  \centering  % 使表格居中
   \caption{Parameters of sequence families}  % 表格标题
   \begin{tabular}{|c|c|c|c|}  % 定义一个有三列的表格
    \toprule
    Sequence  & Length $N$ & Family Size& Bound of Correlation \\    \midrule
   Gold (odd)(\cite{13})  & $2^{n}-1$, $n$ odd& $N+2$&$1+2\sqrt{2}\sqrt{N+1}$\\\midrule
  Gold (even)(\cite{13})  & $2^{n}-1$, $n=4k+2$& $N+2$&$1+2\sqrt{N+1}$\\   \midrule
  Kasami (small)(\cite{19})  & $2^{n}-1$, $n$ even & $\sqrt{N+1}$&$1+\sqrt{N+1}$\\ \midrule 
   Kasami (large)(\cite{19})  & $2^{n}-1$, $n=4k+2$ & $(N+2)\sqrt{N+1}$&$1+2\sqrt{N+1}$\\   \midrule
  Gong (\cite{16})  & $(2^{n}-1)^{2}$, $2^{n}-1$ prime & $\sqrt{N}$&$3+2\sqrt{N}$\\  \midrule    
 Paterson (\cite{29})  & $p^{2}$, $p$ prime, $p\equiv 3$ (mod) $4$ & $N$&$5+4\sqrt{N}$\\  \midrule  
 Paterson (\cite{29})  & $p^{2}$, $p$ prime, $p\equiv 3$ (mod) $4$ & $\sqrt{N}+1$&$3+2\sqrt{N}$\\  \midrule  
  
  Tang \textit{et al}. (\cite{17})& $2(2^{n}-1)$, $n$ odd & $\frac{N}{2}+1$&$2+\sqrt{N+2}$\\       \midrule
J. Rushanan \textit{et al}. (\cite{6})& $p$, prime & $\frac{N-1}{2}$&$5+2\sqrt{N}$\\       \midrule
 Hu \textit{et.al}. (\cite{2})&$p^{n}+1$, $p$ odd prime& $N-2$& $4+2\sqrt{N-1}$\\\midrule
 Jin \textit{et al}. (\cite{9})& $2^{n}+1$ &$N-2$&$2\sqrt{N-1}$\\\midrule
 Jin \textit{et al}. (\cite{23})& $q-1, q\geq 17$ prime power&$N+3$&$6+2\sqrt{N+1}$\\\midrule
Jin \textit{et al}.(\cite{23})& $q-1,q\geq 11$&$N/2$&$2+\sqrt{N+1}$\\\midrule
Jin \textit{et al}.(\cite{23})& $q,q\geq 17$ prime&$N$&$5+2\sqrt{N}$\\\midrule
Jin \textit{et al}.(\cite{23})& $q,q\geq 11$ prime&$(N-1)/2$&$1+2\sqrt{N}$\\\midrule
  Jin \textit{et al}. (\cite{3})& $2^{n}$&$N^{2}-1$&$1+14\sqrt{N}$\\\midrule
Jin \textit{et al}.(\cite{3})& $q+1\pm\sqrt{q},q=4^{n}$&$q-1$&$11\sqrt{q}$\\\midrule
Jin \textit{et al}.(\cite{3})& $q+1\pm\sqrt{2q},q=2\times4^{n}$&$q-1$&$(10+\sqrt{2})\sqrt{q}$\\\midrule
$\mathbf{Our}$ $\mathbf{construction}$& $q$, $q=5^{n}, n$ even&$N^{2}-1$&$7+14\sqrt{q}$\\\midrule
 $\mathbf{Our}$ $\mathbf{construction}$& $q+1\pm \sqrt{3q}, q=3^{n}$, $n$ odd &$q-1$&$4+(10+\sqrt{3})\sqrt{q}$\\\midrule
 $\mathbf{Our}$ $\mathbf{construction}$& $q+1\pm\sqrt{q},q=3^{n}$, $n$ even &$q-1$&$4+11\sqrt{q}$\\\midrule
 $\mathbf{Our}$ $\mathbf{construction}$& $q+1\pm\sqrt{q},q=5^{n}$, $n$ even &$q-1$&$4+11\sqrt{q}$\\\midrule
   \bottomrule
  \end{tabular}
  \label{9}  % 表格标签，用于引用
\end{table}

The theory of function fields has been widely applied in prior constructions of sequences.  In \cite{2} (resp. \cite{9}), a family of binary sequences of length $2^{n}+1$ ( resp. $p^{n}+1$), size $2^{n}-1$ (resp. $p^{n}-1$) and correlation bounded by $2^{(n+2)/2}$ (resp. $4+\lfloor2\cdot p^{n/2}\rfloor$) was constructed via cyclotomic function fields by Jin \textit{et al.} (resp. Hu \textit{et al.}).  In \cite{5}, Xing \textit{et al.} gave a general construction of binary sequences with low-correlation, large linear span via function fields over finite fields with even characteristic. In \cite{10}, Hu \textit{et al.} extended the results of Xing \textit{et al.} by utilizing the theory of Kummer extensions. Luo \textit{et al.} constructed sequences with large nonlinear complexities from rational function fields and cyclotomic function fields; see \cite{20}. Hu \textit{et al.} studied a class of pseudorandom sequences from elliptic function fields over finite fields; see \cite{21}.  Recently, Jin \textit{et al.} constructed a novel family of binary sequences by using rational points on the cyclic elliptic curves over finite fields with even characteristic; see \cite{3}.

Binary sequences can also be constructed over finite fields with odd characteristics. The prior constructions mainly utilized the quadratic characters and the cyclic structure of finite fields such as \cite{2,23,6} \textit{etc.}. It is notable that the lengths of sequences in some constructions are limited by the size of finite fields (see table \ref{9}) or have  strict requirements for the characteristics, such as $p\equiv 3$ (mod $4$) in \cite{29}, $p\geq 17 $ in \cite{23}. In this paper, we construct a new family of sequences with the cyclic elliptic function fields over finite fields with odd characteristics by using the theory of Kummer extensions and the quadratic residue map, which extends the construction by Jin \textit{et al.} in \cite{3}. Then we obtain a new family of binary sequences and we also give the upper bound for the balance, correlation and the lower bound for the linear complexity. 
  
  By contrast, our construction can be applied in cyclic elliptic function fields over any finite fields with odd characteristics. In applications, the insertion or deletion of components of sequences generally destroys the favorable correlation property. Therefore, it is both theoretically and practically important to construct binary sequences with low correlation but not limited by the field sizes. It turns out that our family of binary sequences achieves a longer length controlled by the Serre bound, which means we provide a construction not constrained by the size of finite fields. On the other hand, the family size of our construction is more flexible for the fact that the valuation functions are taken from a specified Riemann-Roch space corresponding to a unique place with arbitrary degree.

   In Table \ref{9}, we list the parameters of several binary sequence families for comparison.

\subsection{Our Main Results and Techniques}
 In this subsection, we provide the main results of this paper and also give a brief introduction of our techniques.

Let $E/\mathbb{F}_{q}$ be an elliptic function fields with the constant field $\mathbb{F}_{q}$. Let $E(\mathbb{F}_{q})$ be the set of the $\mathbb{F}_{q}-$rational places of $E/\mathbb{F}_{q}$. It is well-known that $E(\mathbb{F}_{q})$ forms an abelian group under a specified addition law; see \cite{3} \textit{etc.}. In particular, if $E(\mathbb{F}_{q})$ forms a cyclic group, $E/\mathbb{F}_{q}$ is called a cyclic elliptic function field. The details of definition and properties of the cyclic elliptic function fields shall be introduced in the next section. 

Now we summarize the main results of this paper into the following theorem.

\begin{theorem}\label{mainresults}
     Let $q=p^{n}$ for an odd prime $p$ and a positive integer $n$. Let $E/\mathbb{F}_{q}$ be a cyclic elliptic function field defined over $\mathbb{F}_{q}$ with $q+1+t$ rational places. Let $Q$ be a place of $E$ with $\deg(Q)=d\geq 2$ such that $\gcd(d,q+1+t)=1$. Then we can construct a family of binary sequences $\mathcal{S}=\{\mathbf{s}_{i}: 1\leq i\leq q^{d-1}-1\}$ with length $q+1+t$, size $q^{d-1}-1$, balance bounded by
         \begin{displaymath}             
(d+1)\cdot\lfloor2\sqrt{q}\rfloor+|t|+d,
          \end{displaymath}
linear complexity lower bounded by 
$$\frac{q+1+2t-d-(d+1)\cdot\lfloor2\sqrt{q}\rfloor}{d+d\cdot\lfloor2\sqrt{q}\rfloor}$$
and correlation upper bounded by $$(2d+1)\cdot\lfloor2\sqrt{q}\rfloor+|t|+2d$$ where $\lfloor x\rfloor$ stands for the integer part of $x\in\mathbb{R}$.
          \end{theorem}

Although our construction and the construction in~\cite{3} are both based on the structures of rational places over cyclic elliptic function fields, the main tools in our construction are taken from the theory of Kummer extensions, whereas the main techniques used in \cite{3} are Artin-Schreier extension theory. The detailed comparison shall be provided in Section \ref{sec:3}.

 We give a remark on the main result to finish this subsection that by utilizing the computing software SageMath~\cite{sage}, the actual parameter values of the sequences are much smaller than the theoretical bounds given in Theorem~\ref{mainresults}. Please see Tables~\ref{tab:5} and~\ref{tab:4} in Section \ref{sec:3} for the comparison.

\subsection{Organization of This Paper}
   The rest of this paper is organized as follows. In Section \ref*{sec:2}, we provide preliminaries on binary sequences, extension theory of function fields, elliptic function fields and Kummer extensions of function fields. In Section \ref*{sec:3}, we present a theoretical construction of binary sequences with low correlation, favorable balance and large linear complexity via cyclic elliptic function fields over finite fields with odd characteristics by utilizing the theory of Kummer extensions. We also provide some examples to illustrate our construction.
   In Section \ref*{sec:4}, we give a conclusion and some possible future work on the constructions of binary sequences.
	
	\section{Preliminaries }
	\label{sec:2}
	In this section, we list some preliminaries on binary sequences, extension theory of function fields, elliptic function fields and Kummer extensions.
    \subsection{Autocorrelation and Cross Correlation of Binary Sequences}
	Let $N$ be a positive integer. A binary sequence of length $N$ is a vector $\mathbf{s}\in\mathbb{F}^{N}_{2}$.  For a binary sequence $\mathbf{s}=(s_{0},s_{1},\cdots,s_{N-1})$, the corresponding autocorrelation $A_{t}(\mathbf{s})$ at delay $t$ with $1\leq t\leq N-1$ is defined as 
    \begin{displaymath}
        A_{t}(\mathbf{s}):=\sum^{N-1}_{j=0}(-1)^{s_{j}+s_{j+t}},
    \end{displaymath}
	where $j+t$ is identified with the least non-negative integer modulo $N$. For two distinct binary sequences $\mathbf{u}=(u_{0},u_{1},\cdots,u_{N-1})$ and $\mathbf{v}=(v_{0}, v_{1},\cdots,v_{N-1})\in\mathbb{F}_{2}^{N}$, their cross correlation at delay $t$ with $0\leq t\leq N-1$ is defined as 
    \begin{displaymath}
        C_{t}(\mathbf{u,\mathbf{v}}):=\sum^{N-1}_{j=0}(-1)^{u_{j}+v_{j+t}}.
        \end{displaymath}
Let $\mathcal{S}$ be a family of binary sequences of length $N$.  Denote by $A(\mathcal{S})=\max\{|A_{t}(\mathbf{s})|:\mathbf{s}\in\mathcal{S}, 1 \leq t\leq N-1\}$ and $C(\mathcal{S})=\max\{|C_{t}(\mathbf{u,v})|:\mathbf{u}\neq\mathbf{v}\in\mathcal{S}, 0\leq t\leq N-1\}\}$,
then the correlation of $\mathcal{S}$ is defined as $Cor(\mathcal{S}):=\max\{A(\mathcal{S}), C(\mathcal{S})\}$.
\subsection{Linear Complexity of Binary Sequences}
Now we introduce the definition of linear complexity of periodic binary sequences. For any nonzero sequence $\mathbf{s}\in\mathcal{S}$, the corresponding linear complexity is the smallest positive integer $\ell=\ell(\mathbf{s})$ such that there exist $\ell+1$ elements $\lambda_{0},\lambda_{1},\cdots,\lambda_{\ell}\in\mathbb{F}_{2}$ with $\lambda_{0}=\lambda_{\ell}=1$ and
\begin{displaymath}
    \sum^{\ell}_{i=0}\lambda_{i}s_{i+u}=0
\end{displaymath}
 for any $u\geq 0$. We define the linear complexity of $\mathcal{S}$ by
\begin{displaymath}
    \mathrm{LC}(\mathcal{S}):=\min\{\ell(\mathbf{s}) : \mathbf{s}\in\mathcal{S}\}.
\end{displaymath}
\subsection{Extension Theory of Function Fields}
Let $F/\mathbb{F}_{q}$ be an algebraic function field with genus $\mathfrak{g}_{F}$ over the constant field $\mathbb{F}_{q}$. Any place of $F$ which is invariant under the action of Galois group $\mathrm{Gal}(\overline{\mathbb{F}}_{q}/\mathbb{F}_{q})$ is called rational. Denote by $N(F)$ the number of all rational places of $F$. The Serre bound improves the well-known Hasse-Weil bound and it is given by:
\begin{displaymath}
    |N(F)-q-1|\leq \mathfrak{g}_{F}\cdot\lfloor2\sqrt{q}\rfloor.
\end{displaymath}
Here $\lfloor x\rfloor$ stands for the integer part of $x\in\mathbb{R}$.

A divisor is a formal sum of places. Denote by $\mathbb{P}_{F}$ and $\mathbb{D}_{F}$ the set of places and the set of divisors respectively. Let $\mathrm{v}_{P}$ be a normalized discrete valuation of $F$. For divisor $G=\sum_{P\in\mathbb{P}_{F}}\mathrm{v}_{P}(G)P\in \mathbb{D}_{F}$, the support of $G$ is denoted by $\mathrm{Supp}(G)=\{P\in\mathbb{P}_{F}\ |\ \mathrm{v}_{P}(G)\neq0\}$. The degree of divisor $\deg(G)$ is defined as $\deg(G)=\sum_{P\in\mathbb{P}_{F}}\mathrm{v}_{P}(G)\deg(P)$. A divisor $G=\sum_{P\in\mathbb{P}_{F}}\mathrm{v}_{P}(G)P$ is called effective if $\mathrm{v}_{P}(G)\geq 0$ for all $P\in \mathbb{P}_{F}$.

For any rational function $f\in F\setminus\{0\}$, we have the principal divisor $(f)=\sum_{P\in\mathbb{P}_{F}}\mathrm{v}_{P}(f)P$ and it satisfies $\deg((f))=0$. The Riemann-Roch space associated to a non-negative divisor $G$ is given by the following:
$$\mathcal{L}(G):=\{f\in F\setminus\{0\}\ |\ (f)+G\geq 0\}\cup\{0\}$$
and it is a finite dimensional $\mathbb{F}_{q}-$vector space, and we denote by $\ell(G)$ the dimension of $\mathcal{L}(G)$. If $\deg(G)\geq 2\mathfrak{g}_{F}-1$, then $\ell(G)=\deg(G)+1-\mathfrak{g}_{F}$ from the well-known Riemann-Roch theorem; see \cite{1}.

Denote by $\mathrm{Aut}(F/\mathbb{F}_{q})$ the automorphism group of $F$ over $\mathbb{F}_{q}$, i.e.,
\begin{displaymath}
    \mathrm{Aut}(F/\mathbb{F}_{q})=\{\sigma : F\rightarrow F\mid \sigma\ \text{is an $\mathbb{F}_{q}-$automorphism of}\ F\}.
\end{displaymath}
In the following, we shall consider the group action of automorphism group 
$\mathrm{Aut}(F/\mathbb{F}_{q})$ on $\mathbb{P}_{F}$. Fix a place $P\in\mathbb{P}_{F}$. Then for any automorphism $\sigma\in \mathrm{Aut}(F/\mathbb{F}_{q})$, $\sigma(P)$ is also a place of $F$ and we have the following results.
\begin{lemma}(see \cite{7})
    For any automorphism $\sigma\in \mathrm{Aut}(F/\mathbb{F}_{q})$, $P\in\mathbb{P}_{F}$ and $f\in F$, we have
    \begin{enumerate}
        \item $\deg(\sigma(P))=\deg(P)$;
        \item $\mathrm{v}_{\sigma(P)}(\sigma(f))=\mathrm{v}_{P}(f)$;
        \item $\sigma(f)(\sigma(P))=f(P)$, provided that $\mathrm{v}_{P}(f)\geq 0$.
    \end{enumerate}
    \label{1}
    \end{lemma}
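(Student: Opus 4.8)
The plan is to deduce all three assertions from the single fact that an $\mathbb{F}_q$-automorphism of $F$ carries valuation rings to valuation rings. First I would recall the setup: a place $P$ of $F$ corresponds to a discrete valuation ring $\mathcal{O}_P\subset F$ with maximal ideal $\mathfrak{m}_P$ and $\mathbb{F}_q\subseteq\mathcal{O}_P$, and $\sigma(P)$ is defined to be the place with valuation ring $\sigma(\mathcal{O}_P)$ and maximal ideal $\sigma(\mathfrak{m}_P)$. Since $\sigma$ is a field automorphism fixing $\mathbb{F}_q$ elementwise, $\sigma(\mathcal{O}_P)$ is again a discrete valuation ring of $F$ over $\mathbb{F}_q$, so $P\mapsto\sigma(P)$ is a well-defined bijection of $\mathbb{P}_F$ with inverse induced by $\sigma^{-1}$. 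For part (1), $\sigma$ restricts to an $\mathbb{F}_q$-algebra isomorphism of residue fields $\mathcal{O}_P/\mathfrak{m}_P\to\sigma(\mathcal{O}_P)/\sigma(\mathfrak{m}_P)$; as $\deg(P)$ is by definition the $\mathbb{F}_q$-dimension of the residue field, the degrees agree.

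For part (2), I would introduce $w:=v_{\sigma(P)}\circ\sigma$ and check that it is a surjective discrete valuation of $F$ whose valuation ring is exactly $\{f\in F : \sigma(f)\in\sigma(\mathcal{O}_P)\}=\mathcal{O}_P$. Invoking the uniqueness of the normalized discrete valuation attached to a valuation ring, we conclude $w=v_P$, and evaluating at $f$ yields $v_{\sigma(P)}(\sigma(f))=v_P(f)$.

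For part (3), assuming $v_P(f)\ge 0$ we have $f\in\mathcal{O}_P$ and, by (2), $\sigma(f)\in\mathcal{O}_{\sigma(P)}$, so both residues $f(P)=f+\mathfrak{m}_P$ and $\sigma(f)(\sigma(P))=\sigma(f)+\sigma(\mathfrak{m}_P)$ are defined and correspond under the residue-field isomorphism from (1). Since $\sigma$ fixes $\mathbb{F}_q$ pointwise, that isomorphism is the identity on the image of $\mathbb{F}_q$; in particular, when $P$ is a rational place, both residue fields equal $\mathbb{F}_q$, the isomorphism is the identity, and $\sigma(f)(\sigma(P))=f(P)$ in $\mathbb{F}_q$.

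I do not expect any serious obstacle here, since the argument is essentially formal. The only points demanding a little care are the uniqueness claim used in (2) --- that a discrete valuation ring determines its normalized valuation --- and, in part (3), the tacit identification of the residue fields of $P$ and $\sigma(P)$ when $\deg(P)>1$; both are standard facts about valued function fields and can be cited from \cite{1,7}. In all applications appearing later in this paper the relevant place $P$ is rational, so the equality in (3) needs no such identification.
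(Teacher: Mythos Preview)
Your argument is correct and is the standard one: transport of structure under the field automorphism $\sigma$ gives a bijection on valuation rings, hence on places, and parts (1)--(3) follow immediately from the induced isomorphisms of residue fields and the uniqueness of the normalized valuation associated to a valuation ring. The subtlety you flag in (3) for $\deg(P)>1$ is genuine, and your observation that every use of (3) in this paper is for rational $P$ is exactly right.

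There is nothing to compare against here: the paper does not supply a proof of this lemma but simply cites it from \cite{7} (Niederreiter--Xing), treating it as a known preliminary. Your sketch therefore fills in what the paper deliberately omits, and does so along the lines one would find in any standard reference such as \cite{1} or \cite{11}.
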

    Let $T/\mathbb{F}_{q}$ be a finite extension of $F/\mathbb{F}_{q}$. 
     For any $P\in\mathbb{P}_{F}$ and $Q\in\mathbb{P}_{T}$ with $Q|P$, we denote by $d(Q|P), e(Q|P), f(Q|P)$ the different exponent, ramification index and relative degree of $Q|P$, respectively. The different of $T/F$ is defined as $\mathrm{Diff}(T/F)=\sum_{P\in\mathbb{P}_{F}}\sum_{Q|P}d(Q|P)Q$. If $e(Q|P)$ is coprime to the characteristic $p$, then $d(Q|P)=e(Q|P)-1$ by Dedekind's Different Theorem. Otherwise, the different exponents can be determined by using Hilbert's Different Theorem and orders of the higher ramification groups; see~\cite{1}. Then the Hurwitz genus formula (see \cite{1})  is given by
    \begin{displaymath}
        2\mathfrak{g}_{T}-2=[T: F]\cdot (2\mathfrak{g}_{F}-2)+\deg(\mathrm{Diff}(T/F)).
    \end{displaymath}

    \subsection{Elliptic Function Fields}
An elliptic function field is an algebraic function field of genus one. To discuss isogeny classes, we use the geometric language, \textit{i.e.,} elliptic curves in Lemma \ref{22222}.
Two elliptic curves $E_{1}$ and $E_{2}$ are isogenous if there is a non-constant smooth $\mathbb{F}_{q}-$morphism that sends the zero of $E_{1}$ to the zero of $E_{2}$. It is well known that two elliptic curves over $\mathbb{F}_{q}$ are isogenous if and only if they share the same number of rational points; see \cite{4}. In particular, we denote them by $E_{1}\sim E_{2}$ if $E_{1}$ and $E_{2}$ are isogenous. Denote by $\mathcal{E}(\mathbb{F}_{q})$ the set of all elliptic curves with constant field $\mathbb{F}_{q}$. For any elliptic curve $E\in \mathcal{E}(\mathbb{F}_{q}) $, its isogeny class is defined by $[E]=\{E'\in\mathcal{E}(\mathbb{F}_{q})| E'\sim E\}$. 

Suppose $E/\mathbb{F}_{q}$ is an elliptic function field. Let  $E(\mathbb{F}_{q})$ denote the set of $\mathbb{F}_{q}-$rational places of $E/\mathbb{F}_{q}$. It is well-known that $E(\mathbb{F}_{q})$ forms an abelian group where we denote by $\oplus$ the group operation and denote by $O$ the infinity place; see \cite{11}. For any $P\in E(\mathbb{F}_{q})$ and $j\in\mathbb{Z}$, $[j]P$ means the addition of $j$ times of $P$, \textit{i.e.},
\begin{displaymath}
 \underbrace{P\oplus,\cdots,\oplus P}_{j}=\begin{cases}
      [j]P & \text{if}\ P\neq O,\\
       O&\text{if}\ P=O.
        \end{cases}
\end{displaymath}
The following Lemma classifies the existences of different elliptic function fields.
\begin{lemma}(see \cite{8})
    The isogeny classes of elliptic curves over $\mathbb{F}_{q}$ for $q=p^{n}$ are in one-to-one correspondence with the rational integers $t$ satisfying one of the following conditions:
    \begin{enumerate}
        \item $|t|\leq 2\sqrt{q}$ and $\gcd(t,p)=1$;
        \item $t=\pm2\sqrt{q}$ if $n$ is even;
        \item $t=\pm\sqrt{q}$ if $n$ is even and $p \not\equiv 1\mod 3$;
        \item $t=\pm p^{(n+1)/2}$ if $n$ is odd and $p=2$ or $3$;
        \item $t=0$ if either $1)$ $n$ is odd or $2)$ $n$ is even and $p\not\equiv 1\mod 4$.
    \end{enumerate}
    \label{22222}
\end{lemma}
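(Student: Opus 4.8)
The plan is to translate the statement into a description of which integers occur as Frobenius traces, and then to invoke the classical machinery of complex multiplication and of Honda--Tate. Recall from the paragraph preceding the lemma (and \cite{4}) that two elliptic curves over $\mathbb{F}_{q}$ are isogenous if and only if they have the same number of $\mathbb{F}_{q}$-rational points, i.e.\ the same $t$ in $\#E(\mathbb{F}_{q})=q+1+t$; note also that $t$ and $-t$ are realised simultaneously, since the quadratic twist of $E$ has $q+1-t$ points. Hence it suffices to determine the set $\mathcal{T}_{q}=\{\,t\in\mathbb{Z} : \#E(\mathbb{F}_{q})=q+1+t\text{ for some elliptic curve }E/\mathbb{F}_{q}\,\}$ and to verify that it coincides with the set cut out by conditions $(1)$--$(5)$.

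For the necessity half I would begin from the characteristic polynomial $f(T)=T^{2}+tT+q$ of the $q$-power Frobenius endomorphism $\pi=\pi_{E}$, whose two complex roots have absolute value $\sqrt{q}$; this already forces $|t|\le 2\sqrt{q}$. One then distinguishes the ordinary case $p\nmid t$, in which $\mathbb{Z}[\pi]$ is an order of discriminant $t^{2}-4q$ in the imaginary quadratic field $\mathbb{Q}(\sqrt{t^{2}-4q})$ and no constraint beyond the Hasse bound is imposed, from the supersingular case $p\mid t$. In the supersingular case $\pi/\overline{\pi}=\pi^{2}/q$ is an algebraic number all of whose conjugates have absolute value $1$, hence a root of unity lying in a field of degree at most $2$ over $\mathbb{Q}$, so of order dividing $1,2,3,4,6$; running through these orders shows $t^{2}/q\in\{0,1,2,3,4\}$, which yields precisely the candidate values $t=0$, $t=\pm2\sqrt{q}$, $t=\pm\sqrt{q}$ and $t=\pm p^{(n+1)/2}$, the parity restrictions on $n$ arising from the requirement that $\sqrt{q}$, $\sqrt{2q}$, or $\sqrt{3q}$ be an integer. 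The congruence conditions ($p\not\equiv1\pmod4$ for $t=0$ with $n$ even, $p\not\equiv1\pmod3$ for $t=\pm\sqrt{q}$, and $p\in\{2,3\}$ for $t=\pm p^{(n+1)/2}$) are exactly the conditions under which the corresponding Weil number $\pi$ is the Frobenius of an \emph{elliptic} curve rather than of a higher-dimensional abelian variety: by Honda--Tate the invariants of the endomorphism algebra must vanish away from $p$ and $\infty$ and sum to zero, and this fails precisely when the relevant imaginary quadratic field ($\mathbb{Q}(i)$, resp.\ $\mathbb{Q}(\sqrt{-3})$) splits at $p$.

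For the sufficiency half I would establish existence. For an ordinary $t$, i.e.\ $|t|\le2\sqrt{q}$ with $p\nmid t$, I would invoke Deuring's lifting theorem together with CM theory: take an elliptic curve over a number field with complex multiplication by the order of discriminant $t^{2}-4q$ in $\mathbb{Q}(\sqrt{t^{2}-4q})$ and reduce it modulo a suitable prime above $p$, so that the reduction is an elliptic curve over $\mathbb{F}_{q}$ whose Frobenius trace is exactly $t$. For the supersingular values I would produce explicit supersingular elliptic curves --- using $j$-invariant $0$ and its sextic twists when $p\equiv2\pmod3$, $j$-invariant $1728$ and its quartic twists when $p\equiv3\pmod4$, and ad hoc curves in characteristics $2$ and $3$ --- to realise $0$, $\pm\sqrt{q}$, $\pm p^{(n+1)/2}$, together with the base change to $\mathbb{F}_{q}$ of the supersingular curve over $\mathbb{F}_{p^{2}}$ with rational Frobenius for $\pm2\sqrt{q}$; alternatively, Honda--Tate again produces the isogeny class directly from each admissible Weil number, and Deuring's classification of supersingular $j$-invariants pins down the supersingular cases.

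The main obstacle is the existence direction, and within it the precise local bookkeeping: showing that every integer in the Hasse interval prime to $p$ is actually attained requires controlling the reduction of CM curves finely enough to pin down the trace itself, not merely the isogeny class up to the ambiguities introduced by twisting, and matching the exact congruences $p\not\equiv1\pmod3$ and $p\not\equiv1\pmod4$ in the supersingular cases amounts to a computation of the Hasse invariants of $\mathrm{End}^{0}(E)$ at $p$. Since the lemma is classical --- due to Deuring and Waterhouse, with the elliptic-curve formulation also appearing in Schoof's work --- in the paper we will only record it and refer to \cite{4} for the complete argument.
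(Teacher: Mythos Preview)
The paper does not prove this lemma at all: it is stated as a classical result, with the surrounding text pointing to \cite{4} for details, and no argument is given. Your proposal correctly anticipates this in its final sentence, and the sketch you supply beforehand is the standard Waterhouse--Deuring argument via Honda--Tate and CM lifting, which is correct but simply goes further than the paper does.
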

For $t$ satisfying one of the conditions above, there exist elliptic function fields  with $q+1+t$ rational places in the isogeny class. The following lemma reveals the structure of the additive group of rational places on the elliptic function fields; see \cite{8}. In particular, we focus on the cyclic cases in this paper.
\begin{lemma}(see \cite{8})
    Let $\mathbb{F}_{q}$ be a finite field with $q=p^{n}$ elements. Let $h$ be a possible number of rational places of an elliptic function field $E$ of $\mathbb{F}_{q}$. Consider the decomposition in powers of prime numbers $h=\prod_{r}r^{h_{r}}$. Then the possible cyclic structures of $E(\mathbb{F}_{q})$ are given by
    \begin{displaymath}
        \mathbb{Z}/p^{h_{p}}\mathbb{Z}\times\prod_{r\neq p}( \mathbb{Z}/p^{a_{r}}\mathbb{Z}\times \mathbb{Z}/p^{h_{r}-a_{r}}\mathbb{Z})
        \end{displaymath}
        with
        \begin{enumerate}
        \item In cases  $3)$ and $4)$ of Lemma \ref{22222}: $a_{r}$ is an arbitrary integer satisfying $0\leq a_{r}\leq\min\{v_{r}(q-1),\lfloor h_{r}/2\rfloor\}$, we have $E(\mathbb{F}_{q})~\simeq\mathbb{Z}/h\mathbb{Z}$.
        \item In cases $5)$ of Lemma~\ref{22222}: if $q\not\equiv -1\mod 4$, then $E(\mathbb{F}_{q})\simeq\mathbb{Z}/(q+1)\mathbb{Z}.$
        \end{enumerate}       
        \label{123}
\end{lemma}

 The automorphism group of an elliptic function field $E$ over $\mathbb{F}_{q}$ is in the form of the semidirect product $T_{E}\rtimes\mathrm{Aut}(E,O)$; see \cite{4}, where $\mathrm{Aut}(E,O)$ consists of the $\mathbb{F}_{q}-$automorphisms of $E$ fixing the infinity place $O$, and $T_{E}$ corresponds to the translation group $E$.  The elements in $T_{E}$ can be given by $T_{E}=\{\sigma_{P} : P\in E(\mathbb{F}_{q})\}$ where $\sigma_{P}$ is defined by $\sigma_{P}(Q)=P\oplus Q$ for any rational places $P$ and $Q$ of $E$. Then we only need to consider the actions of $T_{E}$ on the places of $E$. 

Let $F$ be the fixed subfield of $E$  with respect to the translation group $T_{E}$, {i.e.},
    \begin{displaymath}
    F=E^{T_{E}}=\{z\in E : \tau(z)=z\ \text{for any}\ \tau\in T_{E}\},
\end{displaymath}
where $\tau(z)(P)=z(\tau^{-1}(P))$ for any $P\in E(\mathbb{F}_{q})$ and $\tau\in T_{E}$. From  Galois theory, $E/F$ is a Galois extension with $\mathrm{Gal}(E/F)\simeq T_{E}$. In particular, if $E(\mathbb{F}_{q})$ forms a cyclic group of order $q+1+t$ under the addition $\oplus$, then $E$ is called a cyclic elliptic function field.

 In the following, we assume that $t$ satisfies one of the three conditions given in Lemma \ref{123}. Hence, there exists a cyclic elliptic function field $E/\mathbb{F}_{q}$ with $q+1+t$ rational places. Then the translation group $T_{E}$ of $E$ is a cyclic group with the generator $\sigma_{P}$ and the order $q+1+t$. Let $P_{j}=\sigma_{P}^{j}(O)=[j]P$ be the rational place of $E$ for each $0\leq j\leq q+t$. Then the set $\{P,P_{1},\cdots,P_{q+t}\}$ consists of all the rational places of $E$.

Let $d$ be a positive integer relatively prime to $q+1+t$. Then the following Lemmas illustrate that there are exactly $r=B_{d}/(q+1+t)$ places of $F=E^{T_{E}}$ of degree $d$ which split completely in the extension $E/F$.

\begin{lemma}[see \cite{3}]
    Let $E/\mathbb{F}_{q}$ be an elliptic function field with the constant field $\mathbb{F}_{q}$ and $q+1+t$ rational places. Let $B_{d}$ be the number of places with degree $d$ of $E$. Let $\mu(\cdot)$ denote the M$\ddot{o}$bius function. Then we have
    \begin{displaymath}
        B_{d}=\frac{1}{d}\cdot\sum_{r\mid d}\mu\left(\frac{d}{r}\right)\cdot\left(q^{r}+1-S_{r}\right),
    \end{displaymath}
where 
\begin{displaymath}
S_{r}=\sum^{\lfloor r/2\rfloor}_{i=0}(-1)^{r-i}\frac{(r-i-1)!\cdot r}{(r-2i)!\cdot i!}t^{r-2i}q^{i}.
\end{displaymath}
In particular, we have $B_{2}=(q^{2}+q-t^{2}-t)/2$ and $B_{3}=(q^{3}-q+t^{3}-3qt-t)/3$.
\label{111}
\end{lemma}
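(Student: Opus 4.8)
The plan is to route everything through the Hasse--Weil zeta function of $E/\mathbb{F}_{q}$ together with a M\"obius inversion. First I would fix notation: for $r\geq 1$ let $N_{r}$ denote the number of rational (degree one) places of the constant field extension $E\cdot\mathbb{F}_{q^{r}}$, equivalently the number of $\mathbb{F}_{q^{r}}$-points of the associated elliptic curve, so that $N_{1}=q+1+t$ by hypothesis. Since $E$ has genus one, its $L$-polynomial has the form $L(T)=(1-\alpha T)(1-\bar\alpha T)$ for a pair of conjugate algebraic integers $\alpha,\bar\alpha$ (the Frobenius eigenvalues) with $\alpha\bar\alpha=q$; equivalently $\alpha,\bar\alpha$ are the roots of $X^{2}+tX+q$, so $\alpha+\bar\alpha=-t$. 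Comparing coefficients in $\log Z_{E}(T)=\log\!\big(L(T)/\big((1-T)(1-qT)\big)\big)=\sum_{r\geq 1}N_{r}T^{r}/r$ gives the familiar identity $N_{r}=q^{r}+1-(\alpha^{r}+\bar\alpha^{r})$ for all $r\geq1$. Setting $S_{r}:=\alpha^{r}+\bar\alpha^{r}$, this reads $N_{r}=q^{r}+1-S_{r}$.

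Next I would derive the stated closed form for $S_{r}$. The power sums $p_{r}=\alpha^{r}+\bar\alpha^{r}$ of the two roots of $X^{2}-e_{1}X+e_{2}$ satisfy the Newton recursion $p_{r}=e_{1}p_{r-1}-e_{2}p_{r-2}$ with $p_{0}=2$, $p_{1}=e_{1}$; an easy induction (Waring's formula) yields $p_{r}=\sum_{i=0}^{\lfloor r/2\rfloor}(-1)^{i}\,\tfrac{r}{r-i}\binom{r-i}{i}\,e_{1}^{r-2i}e_{2}^{i}$. Substituting $e_{1}=-t$, $e_{2}=q$, rewriting $\tfrac{r}{r-i}\binom{r-i}{i}=\tfrac{(r-i-1)!\,r}{(r-2i)!\,i!}$, and using $(-1)^{i}(-t)^{r-2i}=(-1)^{r-i}t^{r-2i}$, collapses the sum to exactly the expression for $S_{r}$ in the statement.

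The combinatorial core is the standard counting identity $N_{r}=\sum_{d\mid r}d\cdot B_{d}$: an $\mathbb{F}_{q^{r}}$-rational place of $E$ is precisely a closed point of $E$ whose degree divides $r$, and a closed point of degree $d$ with $d\mid r$ splits into exactly $d$ distinct places that are rational over $\mathbb{F}_{q^{r}}$. Applying M\"obius inversion gives $d\cdot B_{d}=\sum_{r\mid d}\mu(d/r)N_{r}$, and substituting $N_{r}=q^{r}+1-S_{r}$ produces the asserted formula for $B_{d}$. For the two special cases I would simply evaluate: $B_{2}=\tfrac12(N_{2}-N_{1})$ with $S_{1}=-t$ and $S_{2}=(\alpha+\bar\alpha)^{2}-2\alpha\bar\alpha=t^{2}-2q$ gives $B_{2}=(q^{2}+q-t^{2}-t)/2$; and $B_{3}=\tfrac13(N_{3}-N_{1})$ with $S_{3}=(\alpha+\bar\alpha)^{3}-3\alpha\bar\alpha(\alpha+\bar\alpha)=-t^{3}+3qt$ gives $B_{3}=(q^{3}-q+t^{3}-3qt-t)/3$.

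I do not expect a genuine obstacle; the argument is entirely standard. The only point requiring real care is sign bookkeeping: reconciling the normalization $N_{1}=q+1+t$ with $\alpha+\bar\alpha=-t$, and tracking the parity factors when passing from Waring's formula in $e_{1}=-t$ to the $(-1)^{r-i}$ appearing in the displayed $S_{r}$. Checking $r=1,2,3$ against the explicit values of $B_{2}$ and $B_{3}$ confirms that the normalization is consistent.
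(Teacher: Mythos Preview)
Your argument is correct and is exactly the standard route: identify $N_r=q^r+1-S_r$ via the $L$-polynomial of $E$, express $S_r=\alpha^r+\bar\alpha^r$ through Waring's formula in the elementary symmetric functions $e_1=-t$, $e_2=q$, and then M\"obius-invert $N_r=\sum_{d\mid r}dB_d$. The sign bookkeeping and the special cases $d=2,3$ check out.

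There is nothing to compare against here: the paper does not give its own proof of this lemma but simply cites \cite{3}. Your write-up is essentially the argument one finds there (and in any standard treatment of point counts on curves over finite fields), so in effect you have supplied what the paper omits.
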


\begin{lemma}[see \cite{3}]
Let $E/\mathbb{F}_{q}$ be a cyclic elliptic function field with the constant field $\mathbb{F}_{q}$ and $q+1+t$ rational places. Let $P$ be a generator of $E(\mathbb{F}_{q})$. Then for any place $Q$ in $E(\mathbb{F}_{q})$ of degree $d$ with $\gcd(d,q+1+t)=1$, and any fixed positive integer $i$, the places $\sigma_{P}^{i}(Q),\sigma_{P}^{i+1}(Q),\cdots,\sigma^{i+q+t}_{P}(Q)$ are pairwise different.
\label{l6}
\end{lemma}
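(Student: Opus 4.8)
The plan is to reduce the statement to a single claim about the action of the cyclic group $T_{E}=\langle\sigma_{P}\rangle$, which has order $N:=q+1+t$: that the only integer $m$ with $0\le m<N$ and $\sigma_{P}^{m}(Q)=Q$ is $m=0$. Granting this, if $\sigma_{P}^{a}(Q)=\sigma_{P}^{b}(Q)$ with $i\le a<b\le i+q+t$, then $\sigma_{P}^{\,b-a}$ fixes $Q$ while $0<b-a\le q+t<N$, a contradiction; and since the displayed list has $q+t+1=N$ entries, it is precisely the $T_{E}$-orbit of $Q$. So everything comes down to showing that the stabiliser of $Q$ in $T_{E}$ is trivial.

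To establish this I would pass to the geometric picture over $\overline{\mathbb{F}}_{q}$. Let $\pi$ denote the $q$-power Frobenius on $E(\overline{\mathbb{F}}_{q})$. Since $\deg(Q)=d$, the place $Q$ corresponds to a single Frobenius orbit $\{R,\pi(R),\dots,\pi^{d-1}(R)\}$ of size exactly $d$, with $\pi^{d}(R)=R$; and because $\sigma_{P}$ is the translation $R\mapsto P\oplus R$, an $\mathbb{F}_{q}$-automorphism of $E$, it commutes with $\pi$ and carries this orbit, after $m$ iterations, to the Frobenius orbit of $[m]P\oplus R$. Hence $\sigma_{P}^{m}(Q)=Q$ holds exactly when $[m]P\oplus R=\pi^{k}(R)$ for some $k\in\{0,1,\dots,d-1\}$. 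I will also use two standard facts about $\pi$: it is a group endomorphism of $E(\overline{\mathbb{F}}_{q})$ (a non-constant morphism of the curve fixing $O$), and it fixes every $\mathbb{F}_{q}$-rational point, so in particular $\pi([m]P)=[m]P$.

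The decisive computation is then to apply $\pi$ repeatedly to $[m]P\oplus R=\pi^{k}(R)$, obtaining $[m]P\oplus\pi^{j}(R)=\pi^{j+k}(R)$ for all $j\ge 0$, and to add these $d$ identities in the group law for $j=0,1,\dots,d-1$. The left-hand side collapses to $[md]P\oplus\big(\bigoplus_{j=0}^{d-1}\pi^{j}(R)\big)$, while the right-hand side $\bigoplus_{j=0}^{d-1}\pi^{j+k}(R)$ equals $\bigoplus_{j=0}^{d-1}\pi^{j}(R)$ because the exponents $j+k$ run over a complete residue system modulo $d$ and $\pi^{d}(R)=R$. Cancelling the common summand gives $[md]P=O$, so $N\mid md$ since $P$ generates the cyclic group $E(\mathbb{F}_{q})$ of order $N$; as $\gcd(d,N)=1$ this forces $N\mid m$, hence $m=0$, which is the claim.

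The step I expect to require the most care is the setup, not the algebra: one must make the dictionary between degree-$d$ places and exact-size-$d$ Frobenius orbits precise, verify that $\sigma_{P}$ acts as translation by $P$ on such orbits (which rests on its commuting with $\pi$), and confirm that $\pi$ is a homomorphism fixing $E(\mathbb{F}_{q})$ pointwise. A purely function-field-theoretic alternative is to put $Q'=Q\cap F$, identify $\mathrm{Stab}_{T_{E}}(Q)$ with the decomposition group $D(Q\mid Q')$ of order $e(Q\mid Q')f(Q\mid Q')$, and combine $\deg(Q)=f(Q\mid Q')\deg(Q')$ with $\gcd(d,N)=1$ to get $f(Q\mid Q')=1$; but ruling out ramification, i.e.\ showing $e(Q\mid Q')=1$, still needs essentially the same input, so I would present the geometric argument as the main line.
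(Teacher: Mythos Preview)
Your argument is correct. The paper does not give its own proof of this lemma; it merely cites the references \cite{5,10}, where the corresponding statement is established for a general function field $F$ carrying a cyclic automorphism group $G$ of order $N$. In that generality the argument proceeds via the fixed field $F^{G}$: the stabiliser of $Q$ in $G$ is the decomposition group $D(Q\mid Q')$ with $Q'=Q\cap F^{G}$, one shows $f(Q\mid Q')\mid\gcd(d,N)=1$, and then argues separately that the inertia is trivial.

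Your route is genuinely different and more geometric: you exploit the elliptic-curve group law directly, translating the condition $\sigma_{P}^{m}(Q)=Q$ into $[m]P=\pi^{k}(R)\ominus R$, then summing over a full Frobenius period to collapse the right-hand side and obtain $[md]P=O$. This is shorter and avoids the decomposition-group machinery entirely; in particular you never have to argue that $e(Q\mid Q')=1$, which in the general setting of \cite{5,10} is the delicate step. The trade-off is that your argument is specific to translations on an abelian variety (it uses that $\pi$ is a group endomorphism and that the automorphisms are translations), whereas the cited approach works for arbitrary cyclic groups of automorphisms. For the purposes of this paper either is perfectly adequate, and your version is arguably cleaner. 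One small remark: your alternative sketch via decomposition groups can be completed by noting that a nontrivial translation has no geometric fixed points, hence the inertia group at any $Q$ is trivial and $e(Q\mid Q')=1$; this is exactly the ``same input'' you allude to.
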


In this paper, we shall utilize the quadratic residue map $\eta$ and this structure of rational places over the cyclic elliptic function fields with odd characteristics to construct a new family of binary sequences with excellent parameters.

\subsection{Kummer Extensions of Function Fields}
 In this subsection, we recall some properties of Kummer extensions of function fields which will be extensively used in the later discussions. For details, the readers may refer to \cite{11}. 
\begin{proposition}
    For a positive integer $m$, we let $F/K$ be an algebraic function field where $K$ contains a primitive $m$-th root of unity. Suppose that $u\in F$ is an element satisfying
    \begin{displaymath}
        u\neq w^{d}\ \ \text{for all}\ w\in F\ \text{and}\ d\mid m.
    \end{displaymath}
    Let
    \begin{displaymath}
        F'=F(y)\ \text{with}\ y^{m}=u.
    \end{displaymath}
    Such an extension $F'/F$ is said to be a Kummer extension of $F$. We have:
    \begin{enumerate}
        \item The polynomial $\Phi(T)=T^{m}-u$ is the minimal polynomial of $y$ over $F$ (in particular, it is irreducible over $F$). The extension $F'/F$ is Galois of degree $[F':F]=m$; its Galois group is cyclic, and the automorphisms of $F'/F$ are given by $\sigma(y)=\zeta y$, where $\zeta\in K$ is an $m-$th root of unity.
        \item Let $P\in\mathbb{P}_{F}$ and $P'\in\mathbb{P}_{F'}$ be an extension of $P$. Then 
        \begin{displaymath}
            e(P'|P)=\frac{m}{r_{P}}\ \text{and}\ d(P'|P)=\frac{m}{r_{P}}-1,
        \end{displaymath}
        where $r_{P}:=\gcd(m,\mathrm{v}_{P}(u))>0$ is the greatest common divisor of $m$ and $\mathrm{v}_{P}(u)$.
        \item Assume that there is $Q\in\mathbb{P}_{F}$ such that $\gcd(\mathrm{v}_{Q}(u),m)=1$. Then $K$ is the full constant field of $F'$, the extension $F'/F$ is cyclic of order $m$, and the genus of $F'$ satisfies
        \begin{displaymath}
2\mathfrak{g}_{F'}-2=m(2\mathfrak{g}_{F}-2)+\sum_{P\in\mathbb{P}_{F}}(m-r_{P})\deg P.
        \end{displaymath}
    \end{enumerate}
    \label{3}
\end{proposition}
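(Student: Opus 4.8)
\emph{Proof plan.} The plan is to prove the three assertions in turn, the ramification formula in (2) being the technical heart of the matter: (1) is essentially formal once the binomial $T^m-u$ is known to be irreducible, and (3) follows by combining (1) and (2) with the Hurwitz genus formula. The main obstacle will be the exact determination of $e(P'|P)$ in (2) --- the obvious valuation identity only pins it down up to a divisor of $r_P$, and an auxiliary subfield is needed to remove this slack.

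\emph{Part (1).} First I would verify that $\Phi(T)=T^m-u$ is the minimal polynomial of $y$ over $F$, i.e.\ that it is irreducible. For this I invoke the classical irreducibility criterion for binomials: $T^m-u$ is irreducible over a field $F$ provided $u\notin F^{*p}$ for every prime $p\mid m$, and, in the case $4\mid m$, additionally $u\notin -4F^{*4}$. The standing hypothesis ``$u\ne w^d$ for all $w\in F$ and $d\mid m$, $d>1$'' yields the first condition at once (if $u=w^d$ then $u\in F^{*p}$ for any prime $p\mid d$); for the second, note that $4\mid m$ forces $\sqrt{-1}=\zeta_m^{m/4}\in K$, so $-4w^4=\bigl((1+\sqrt{-1})w\bigr)^4\in F^{*4}$ and the hypothesis with $d=4$ rules this out as well. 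Hence $[F':F]=\deg\Phi=m$. Since $\zeta_m\in K\subseteq F$, all the roots $\zeta_m^{i}y$ $(0\le i<m)$ of $\Phi$ lie in $F'$, so $F'$ is the splitting field over $F$ of the separable polynomial $\Phi$ (separability because $\mathrm{char}\,F\nmid m$); thus $F'/F$ is Galois. Every $\sigma\in\mathrm{Gal}(F'/F)$ sends $y$ to a root of $\Phi$, so $\sigma(y)/y\in\mu_m(K)$, and the resulting map $\mathrm{Gal}(F'/F)\to\mu_m(K)$ is an injective homomorphism of groups of the same order $m$, hence an isomorphism; since $\mu_m(K)$ is cyclic, so is $\mathrm{Gal}(F'/F)$, with automorphisms $\sigma(y)=\zeta y$ as claimed.

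\emph{Part (2).} Fix $P\in\mathbb{P}_F$, a place $P'\mid P$ of $F'$, and write $k=v_P(u)$, $r=r_P=\gcd(m,k)$ and $e=e(P'|P)$. From $e\,v_P(u)=v_{P'}(u)=v_{P'}(y^m)=m\,v_{P'}(y)$ one gets $m\mid ek$, hence $m/r\mid e$; together with $e\mid[F':F]=m$ this gives $m/r\mid e\mid m$. To rule out the proper divisors, i.e.\ to show $e\le m/r$, I would pass to an auxiliary subfield. Write $u=\pi^{k}w$ with $\pi$ a prime element of $F$ at $P$ and $w$ a $P$-unit, and set $z:=y^{m/r}\pi^{-k/r}\in F'$, so that $z^{r}=w$. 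Since $v_P(w)=0$ and $\mathrm{char}\,F\nmid r$, passing to the completion at $P$ and invoking Hensel's lemma shows that $T^{r}-\bar w$ is separable over the residue field of $P$, so $P$ is \emph{unramified} in $F(z)/F$. On the other hand $y^{m/r}=z\pi^{k/r}\in F(z)$, so $[F':F(z)]\le m/r$; taking $\tilde P$ with $P'\mid\tilde P\mid P$ in the tower $F\subseteq F(z)\subseteq F'$ then gives $e=e(P'|\tilde P)\,e(\tilde P|P)=e(P'|\tilde P)\le[F':F(z)]\le m/r$. Hence $e=m/r$. Finally $e\mid m$ and $\mathrm{char}\,F\nmid m$, so the ramification is tame and Dedekind's different theorem gives $d(P'|P)=e-1=m/r-1$.

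\emph{Part (3).} Assume there is $Q$ with $\gcd(v_Q(u),m)=1$; then $u$ cannot be a proper power in $F$, so the hypothesis of (1) holds and $F'/F$ is cyclic of degree $m$. Let $K'$ be the full constant field of $F'$ and $s=[K':K]$. Since $K$ is algebraically closed in $F$, the compositum $K'F$ satisfies $[K'F:F]=s$, and $K'F$ lies between $F$ and $F'$, so $s\mid[F':F]=m$. Constant field extensions are unramified at every place, so $e(Q''|Q)=1$ for $Q''$ above $Q$ in $K'F$; but by (2) we have $e(Q'|Q)=m/\gcd(v_Q(u),m)=m$ for $Q'\mid Q$ in $F'$, and factoring $e(Q'|Q)=e(Q'|Q'')\,e(Q''|Q)$ forces $m=e(Q'|Q'')\le[F':K'F]=m/s$, whence $s=1$ and $K'=K$. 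For the genus, apply the Hurwitz genus formula $2g(F')-2=m(2g(F)-2)+\deg\mathrm{Diff}(F'/F)$. Grouping the different $\mathrm{Diff}(F'/F)$ according to the place $P$ of $F$ below each contributing place, using $d(P'|P)=m/r_P-1$ from (2) and $\sum_{P'|P}f(P'|P)=r_P$ (which follows from $\sum_{P'|P}e(P'|P)f(P'|P)=m$ since $e(P'|P)=m/r_P$ is constant over $P'\mid P$), one finds $\sum_{P'|P}d(P'|P)\deg P'=(m-r_P)\deg P$ for each $P$; the sum over all $P$ is finite since $r_P=m$ whenever $v_P(u)=0$, and summing gives $\deg\mathrm{Diff}(F'/F)=\sum_{P\in\mathbb{P}_F}(m-r_P)\deg P$, which is the asserted genus formula.
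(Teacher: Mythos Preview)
The paper does not prove this proposition; it is quoted as a preliminary from Stichtenoth's textbook (reference~\cite{11} in the paper), so there is no ``paper's own proof'' to compare against. Your argument is correct and follows the standard route one finds in that reference: binomial irreducibility plus the cyclic Galois structure for~(1), a local computation for the ramification index in~(2), and Hurwitz together with the fundamental identity for~(3). Your handling of the delicate point in~(2)---passing to the auxiliary element $z=y^{m/r}\pi^{-k/r}$ with $z^r$ a $P$-unit to split off the unramified part of the tower---is exactly the right idea and is essentially how Stichtenoth does it; the only cosmetic difference is that some presentations phrase the same step via Abhyankar's lemma rather than an explicit Hensel argument.
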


	\section{New Binary sequences via Cyclic elliptic function fields with odd characteristics}
	\label{sec:3}
	Motivated by the constructions of binary sequences with low correlation via cyclotomic function fields and cyclic elliptic curves; see \cite{2, 9,3} respectively, we construct a novel family of binary sequences with favorable parameters via the cyclic elliptic function fields over finite fields with odd characteristics.
	\subsection{A New Construction of Binary Sequences via Cyclic Elliptic Function Fields}
     Let $\mathbb{F}_{q}$ be a finite field with $q$ an odd prime power. In this subsection, we shall utilize the algebraic structure of rational places on cyclic elliptic function field $E/\mathbb{F}_{q}$ and the quadratic residue map $\eta$ defined below to construct a new family of binary sequences with low correlation and large linear complexity.

    The quadratic residue map $\eta:\mathbb{F}_{q}\rightarrow\mathbb{F}_{2}$ with $q$ an odd prime power is defined by
\begin{displaymath}
  \eta(a)=\begin{cases}
        0,\ \text{if there exists}\ b\in\mathbb{F}^{*}_{q}\ \text{such that}\ a=b^{2}\ \text{or}\ a=0\\
        1,\ \text{otherwise}.
    \end{cases}
\end{displaymath}
It is clear that $\eta(a_{1}a_{2})=\eta(a_{1})+\eta(a_{2}),$ for any $a_{1},a_{2}\in\mathbb{F}_{q}^{*}$.

Let $Q$ be a place of cyclic elliptic function field $E$ with $\deg(Q)=d\geq 2$ such that $\gcd(d,q+1+t)=1$. By the Riemann-Roch theorem, we have $\ell(Q)=\dim(\mathcal{L}(Q))=\deg(Q)+1-\mathfrak{g}_{E}=\deg(Q)=d$ since $\deg(Q)=d\geq 2\mathfrak{g}_{E}-1=1$. Obviously, we have the decomposition $\mathcal{L}(Q)=\mathbb{F}_{q}\oplus V$. The vector space $V$ is called the complementary dual space of $\mathbb{F}_{q}$ in $\mathcal{L}(Q)$. The elements in $V\setminus\{0\}$ can be listed as $\{z_{1},z_{2},\cdots,z_{q^{d-1}-1}\}$.

Let $\sigma_{P}$ be a generator of the cyclic translation group $T_{E}$ for some rational place $P$ on $E(\mathbb{F}_{q})$. That is, $P_{j}=\sigma^{j}_{P}[O]=[j]P$ for $0\leq j\leq q+t$ are all rational places of $E$. For any function $z_{i}\in V\setminus\{0\}$ with $1\leq i\leq q^{d-1}-1$, the binary sequence $\mathbf{s}_{i}$ associated to $z_{i}$ is defined as follows:
\begin{displaymath}
    \mathbf{s}_{i}=(s_{i,0},s_{i,1},\cdots,s_{i,q+t})\ \text{with}\ s_{i,j}=\eta(z_{i}(P_{j}))\ \text{for}\ 0\leq j\leq q+t.
    \end{displaymath}
 Then the family of binary sequences $\mathcal{S}:=\{\mathbf{s}_{i}: 1\leq i\leq q^{d-1}-1\}$ has length $q+1+t$ and size $q^{d-1}-1$.  In the following, we shall illustrate that the family $\mathcal{S}$ has favorable balance, low correlation and large linear complexity.

    \subsection{The Balance and Correlation}
	In this subsection, we shall show that the family of binary sequences $\mathcal{S}=\{\mathbf{s}_{i} : 1\leq i\leq q^{d-1}-1\}$ has favorable balance and low correlation. 
    \begin{lemma}\label{2}
        For any elements $z_{1}, z_{2}\in V\setminus\{0\}$, we have 
        \begin{enumerate}
            \item $(z_{1}\sigma(z_{2}))_{\infty}=Q+\sigma(Q)$ for any automorphism $\sigma\in\mathrm{Gal}(E/F);$
            \item $\deg((z_{1}\sigma(z_{2}))_{\infty})=2d$ for any automorphism $\sigma\in\mathrm{Gal}(E/F).$
            \end{enumerate}
    \end{lemma}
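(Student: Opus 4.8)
The plan is to compute the pole divisor of $z_{1}\sigma(z_{2})$ one valuation at a time, after first determining the pole and zero divisors of an arbitrary element of $V\setminus\{0\}$. Every $z\in V\setminus\{0\}$ is nonconstant, since a nonzero constant would lie in $V\cap\mathbb{F}_{q}^{*}=\emptyset$ by the definition of the complementary dual space; hence $(z)_{\infty}\neq 0$. As $z\in\mathcal{L}(Q)$ we have $(z)_{\infty}\leq Q$, and because $Q$ is a prime divisor of degree $d$ the only effective divisors it dominates are $0$ and $Q$; therefore $(z)_{\infty}=Q$. In particular $v_{Q}(z)=-1$, $v_{P}(z)\geq 0$ for all $P\neq Q$, and $(z)_{0}$ is effective of degree $d$. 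Applying Lemma~\ref{1} to $\sigma\in\mathrm{Gal}(E/F)\cong T_{E}$ yields $(\sigma(z_{2}))_{\infty}=\sigma\big((z_{2})_{\infty}\big)=\sigma(Q)$, so similarly $v_{\sigma(Q)}(\sigma(z_{2}))=-1$, $v_{P}(\sigma(z_{2}))\geq 0$ for $P\neq\sigma(Q)$, $(\sigma(z_{2}))_{0}$ is effective of degree $d$, and $\deg\sigma(Q)=\deg Q=d$.

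The heart of the proof is the claim that $Q\not\sim\sigma(Q)$ whenever $\sigma\neq\mathrm{id}$. Writing $\sigma=\sigma_{P'}$ for the rational place $P'\in E(\mathbb{F}_{q})$ that corresponds to $\sigma$, I would use the group law on $E$: for any geometric point $R$ one has $(R\oplus P')-R\sim P'-O$, and summing this over the $d$ conjugate points in the support of $Q$ gives $\sigma(Q)-Q\sim d\,(P'-O)$. Hence $\sigma(Q)\sim Q$ if and only if $d\,(P'-O)$ is principal, i.e.\ if and only if $[d]P'=O$ in $E(\mathbb{F}_{q})$; since $\gcd(d,q+1+t)=1$ and $|E(\mathbb{F}_{q})|=q+1+t$, multiplication by $d$ is an automorphism of $E(\mathbb{F}_{q})$, so $[d]P'=O$ forces $P'=O$, that is $\sigma=\mathrm{id}$. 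In particular $\sigma(Q)\neq Q$ for $\sigma\neq\mathrm{id}$. Moreover, for $\sigma\neq\mathrm{id}$, $z_{1}$ cannot vanish at $\sigma(Q)$: otherwise $(z_{1})_{0}\geq\sigma(Q)$, a degree comparison forces $(z_{1})_{0}=\sigma(Q)$, and then $(z_{1})=\sigma(Q)-Q$ would be principal, contradicting $\sigma(Q)\not\sim Q$. The same argument shows $\sigma(z_{2})$ does not vanish at $Q$ when $\sigma\neq\mathrm{id}$.

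Finally I would assemble the valuation count. If $\sigma=\mathrm{id}$ then $v_{Q}(z_{1}z_{2})=v_{Q}(z_{1})+v_{Q}(z_{2})=-2$ while $v_{P}(z_{1}z_{2})\geq 0$ for $P\neq Q$, so $(z_{1}z_{2})_{\infty}=2Q=Q+\sigma(Q)$. If $\sigma\neq\mathrm{id}$ then, combining the valuations from the first paragraph with the non-vanishing just established, $v_{Q}(z_{1}\sigma(z_{2}))=(-1)+0=-1$, $v_{\sigma(Q)}(z_{1}\sigma(z_{2}))=0+(-1)=-1$, and $v_{P}(z_{1}\sigma(z_{2}))\geq 0$ at every remaining place; hence $(z_{1}\sigma(z_{2}))_{\infty}=Q+\sigma(Q)$, which is part~(1), and part~(2) follows immediately since $\deg\big((z_{1}\sigma(z_{2}))_{\infty}\big)=\deg Q+\deg\sigma(Q)=2d$ by Lemma~\ref{1}. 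I expect the only delicate step to be the linear-equivalence claim $Q\not\sim\sigma(Q)$; this is exactly where the hypothesis $\gcd(d,q+1+t)=1$ is used, and one must carry out the group-law computation at the level of geometric points (equivalently, via the isomorphism $\mathrm{Pic}^{0}(E)\cong E(\mathbb{F}_{q})$) rather than formally on closed points.
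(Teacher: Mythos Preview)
Your proof is correct and shares the paper's overall skeleton---establish $(z_{i})_{\infty}=Q$, use Lemma~\ref{1} to get $(\sigma(z_{2}))_{\infty}=\sigma(Q)$, then combine---but you add a step the paper omits. The paper simply observes that $z_{1}\sigma(z_{2})\in\mathcal{L}(Q+\sigma(Q))$ and that $\sigma(Q)\neq Q$ for $\sigma\neq\mathrm{id}$ (the latter implicitly from the orbit lemma), and then asserts the equality $(z_{1}\sigma(z_{2}))_{\infty}=Q+\sigma(Q)$ without addressing whether $z_{1}$ could have a zero at $\sigma(Q)$ (or $\sigma(z_{2})$ a zero at $Q$), which would cancel a pole. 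You close this by proving the stronger statement $Q\not\sim\sigma(Q)$ via the group law and the hypothesis $\gcd(d,q+1+t)=1$, and then using a degree comparison to exclude any such zero. This makes your argument genuinely more rigorous than the paper's own; in fairness to the paper, for its downstream applications (upper-bounding the genus of the Kummer extension) the inequality $(z_{1}\sigma(z_{2}))_{\infty}\leq Q+\sigma(Q)$ would already suffice, so the gap is harmless there even if the lemma as stated needs your extra step.
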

        \begin{proof}
            For any functions $z_{1},z_{2}\in V\setminus\{0\}$, the pole divisors are given by $(z_{1})_{\infty}=(z_{2})_{\infty}=Q$. By Lemma \ref{1}, we have $\mathrm{v}_{\sigma(Q)}(\sigma(z_{2}))=\mathrm{v}_{Q}(z_{2})=-1$ and $(\sigma(z_{2}))_{\infty}=\sigma(Q)$. Then for any automorphism $\sigma\in\mathrm{Gal}(E/F)\setminus\{\mathrm{id}\}$, we have $z_{1}\sigma(z_{2})\in\mathcal{L}(Q+\sigma(Q))$ and  $\sigma(Q)\neq Q$. If $\sigma=\mathrm{id}$, then we have $(z_{1}\sigma(z_{2}))_{\infty}=(z_{1}z_{2})_{\infty}=2Q$. Therefore, $(z_{1}\sigma(z_{2}))_{\infty}=Q+\sigma(Q)$ for any automorphism $\sigma\in\mathrm{Gal}(E/F)$.
            
            For the second claim, for functions $z_{1},z_{2}\in V\setminus\{0\}$, we have $(z_{1}z_{2})_{\infty}=2Q$ since $(z_{1})_{\infty}=(z_{2})_{\infty}=Q$. By the relation $\deg(Q)=\deg(\sigma(Q))=d$ for $\mathrm{id}\neq \sigma\in\mathrm{Gal}(E/F)$, we have $\deg((z_{1}\sigma(z_{2}))_{\infty})=2d$ for any automorphism $\sigma\in\mathrm{Gal}(E/F)$.
            \end{proof}

The following Lemma determines all the ramified places and calculates the upper bound of the genus of the Kummer extension $E'/E$ over the elliptic function field $E/\mathbb{F}_{q}$.
    \begin{lemma}   
        Let $\mathbb{F}_{q}$ be a finite field with $q$ an odd prime power and $E/\mathbb{F}_{q}$ an elliptic function field. Consider the Kummer extension $E'=E(y)/E$ with defining equation $$y^{2}=z$$ for some function $z\in V\setminus\{0\}$. Then we have the following results: 
        \begin{enumerate}
        \item The ramified places in $\mathbb{P}_{E}$ are those $P\in\mathbb{P}_{E}$ with $\gcd(2,\mathrm{v}_{P}(z))=1$, \textit{i.e.}, $\mathrm{v}_{P}(z)$ must be odd. 
        \item The genus $\mathfrak{g}_{E'}$ of the Kummer extension $E'$ is upper bounded by
    \begin{equation}
        \mathfrak{g}_{E'}\leq 1+d.
        \label{3333}
        \end{equation}
    \end{enumerate}
    \end{lemma}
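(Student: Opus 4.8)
The plan is to apply Proposition~\ref{3} (the Kummer extension machinery) directly to the extension $E'=E(y)/E$ with $m=2$ and $u=z$. First I would check that $z$ is not a square in $E$: since $z\in V\setminus\{0\}\subseteq\mathcal{L}(Q)$ has pole divisor $(z)_\infty=Q$ with $\deg Q=d$ and $Q$ is a prime place, $v_Q(z)=-1$ is odd, so $z$ cannot equal $w^2$ for any $w\in E$; this simultaneously verifies the hypothesis $u\neq w^d$ for $d\mid 2, d>1$ and supplies a place $Q$ with $\gcd(v_Q(z),2)=1$, so part~(3) of Proposition~\ref{3} applies and $E'/E$ is a genuine degree-$2$ Kummer extension with $\mathbb{F}_q$ as full constant field. (If $q$ is a power of an odd prime, $\mathbb{F}_q$ contains $-1$, the primitive $2$nd root of unity, so the standing hypothesis of Proposition~\ref{3} is met.)

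For part~(1), by part~(2) of Proposition~\ref{3} a place $P\in\mathbb{P}_E$ is ramified in $E'$ exactly when $e(P'|P)=2/r_P>1$, i.e. when $r_P=\gcd(2,v_P(z))=1$, which is precisely the condition that $v_P(z)$ is odd. This is immediate once the proposition is invoked.

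For part~(2), I would use the genus formula in part~(3) of Proposition~\ref{3}: with $g(E)=1$ we get
\begin{displaymath}
2g(E')-2 = 2(2\cdot 1-2) + \sum_{P\in\mathbb{P}_E}(2-r_P)\deg P = \sum_{P\colon v_P(z)\text{ odd}}\deg P,
\end{displaymath}
since each term contributes $0$ when $v_P(z)$ is even and $\deg P$ when $v_P(z)$ is odd. The set of places $P$ with $v_P(z)$ odd is contained in $\mathrm{Supp}((z))=\mathrm{Supp}((z)_0)\cup\mathrm{Supp}((z)_\infty)$. The pole part contributes only $Q$, with $\deg Q=d$. For the zero part, $(z)_0$ is an effective divisor of degree $\deg((z)_0)=\deg((z)_\infty)=d$ (the principal divisor has degree $0$), and the sum of $\deg P$ over places $P$ with $v_P(z)$ odd appearing in $(z)_0$ is at most $\sum_{P}v_P(z)\deg P=\deg((z)_0)=d$. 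Hence $2g(E')-2\le d+d=2d$, giving $g(E')\le 1+d$, which is~\eqref{3333}.

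The only mild subtlety — and the one place I would be careful about — is the bound on the zero-part contribution: one must note that $\sum_{P:\,v_P(z)\text{ odd}}\deg P \le \sum_{P\in\mathrm{Supp}((z)_0)} v_P(z)\deg P$ because each odd $v_P(z)\ge 1$, and the right-hand side equals $\deg((z)_0)=d$. Equality $g(E')=1+d$ holds precisely when $(z)_0$ is squarefree (all zero orders equal to $1$) and $d$ is likewise the degree of a prime-to-square support; in general we only need the inequality, so no further case analysis is required. Everything else is a direct substitution into Proposition~\ref{3}, so there is no real obstacle beyond bookkeeping of divisor degrees.
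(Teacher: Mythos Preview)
Your proof is correct and follows essentially the same route as the paper: both invoke the Kummer ramification criterion from Proposition~\ref{3} for part~(1), and for part~(2) bound the different/ramification divisor by $\deg Q + \deg((z)_0)=2d$ via the Hurwitz genus formula with $g(E)=1$. Your version is in fact slightly more careful, since you explicitly verify that $z$ is not a square in $E$ (using $v_Q(z)=-1$), a hypothesis the paper uses implicitly but does not spell out.
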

        \begin{proof}
        For the first claim, by the Proposition \ref{3}, if any place $P\in\mathbb{P}_{E}$ is ramified in $E'/E$ and $P'|P$ with $P'\in\mathbb{P}_{E'}$, it satisfies
            \begin{displaymath}
                e(P'|P)=\frac{2}{r_{P}}\ \text{and}\ d(P'|P)=\frac{2}{r_{P}}-1
            \end{displaymath}
where $r_{P}=\gcd(2,\mathrm{v}_{P}(z))$ and we have the following classifications.

If $r_{P}=\gcd(2,\mathrm{v}_{P}(z))=1$, then the ramification index and different exponent are given by 
\begin{displaymath}
                e(P'|P)=2\ \text{and}\ d(P'|P)=1.
            \end{displaymath}
If $r_{P}=\gcd(2,\mathrm{v}_{P}(z))=2$, then we have 
\begin{displaymath}
                e(P'|P)=1\ \text{and}\ d(P'|P)=0
            \end{displaymath}
which means $P\in\mathbb{P}_{E}$ is unramified in the Kummer extension $E'/E$. 

For the second claim, based on the analysis of behaviors of the ramified places, we have determined all the ramified places in $\mathbb{P}_{E}$ in the Kummer extension $E'/E$. According to the Hurwitz genus formula, we have the following estimation:
\begin{displaymath}
\begin{split}
    2\mathfrak{g}_{E'}-2&=[E':E](2\mathfrak{g}_{E}-2)+\deg\mathrm{Diff}(E'/E)\\
    &=\deg\mathrm{Diff}(E'/E)\ (\text{by the genus}\ \mathfrak{g}_{E}=1)\\
    &=\deg(Q)+\sum_{P\in\mathcal{Z}(z),\ r_{P}=1}\deg P\\
    &\leq d+\sum_{P\in\mathcal{Z}(z)}\deg P\\
    &\leq d+\sum_{P\in\mathcal{Z}(z)}\mathrm{v}_{P}(z)\deg P=2d.
    \end{split}
\end{displaymath}
where $\mathcal{Z}(z)$ denotes the set of the zero places of $z$. Then we have $2\mathfrak{g}_{E'}-2\leq 2d$, \textit{i.e.}, $\mathfrak{g}_{E'}\leq 1+d.$
    \end{proof}

\begin{lemma}[see \cite{5}]
    Let $z\in V\setminus\{0\}$. Put $P_{i}=\tau^{i}(P)$ for all $1\leq i\leq N-1$ and $\tau\in\mathrm{Aut}(E/\mathbb{F}_{q})$, then $P_{j+N}=P_{j}$ for all $j\in\mathbb{Z}$ and $P_{\ell},P_{\ell+1},\cdots,P_{\ell+N-1}$ are $n$ pairwise distinct rational places for any fixed $\ell\in\mathbb{Z}$ and we have $z(P_{\ell+u})=z(\tau^{u}(P_{\ell}))=\tau^{-u}(z)(P_{\ell})$ for any $0\leq u\leq N-1$.
    \end{lemma}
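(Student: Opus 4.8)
The plan is to reduce the three claims to the simply transitive action of the cyclic translation group on the rational places, together with Lemma~\ref{1}. Throughout I read $\tau$ as a generator of the cyclic translation group $T_{E}$ and $n$ as its order $q+1+t$, so that $\langle\tau\rangle=T_{E}$ acts simply transitively on the set $E(\mathbb{F}_{q})$; in particular $|E(\mathbb{F}_{q})|=n$, the stabiliser of any rational place is trivial, and by Lemma~\ref{1}(1) each $P_{i}=\tau^{i}(P)$ is again a rational place, so the $\tau$-orbit of $P$ is all of $E(\mathbb{F}_{q})$.

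First I would dispose of periodicity and distinctness. Since $\tau^{n}=\mathrm{id}$, one gets $P_{j+n}=\tau^{j+n}(P)=\tau^{j}\!\bigl(\tau^{n}(P)\bigr)=\tau^{j}(P)=P_{j}$ for every $j\in\mathbb{Z}$. For distinctness, if $P_{\ell+i}=P_{\ell+j}$ with $0\le i<j\le n-1$, then applying $\tau^{-\ell}$ gives $\tau^{\,j-i}(P)=P$; triviality of the stabiliser forces $\tau^{\,j-i}=\mathrm{id}$, hence $n\mid j-i$, which is impossible for $0<j-i<n$. (Alternatively one may simply invoke the preceding lemma of \cite{5,10} with the degree-$d$ place replaced by a rational place.) Thus $P_{\ell},P_{\ell+1},\dots,P_{\ell+n-1}$ are $n$ pairwise distinct rational places.

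It remains to establish the evaluation identity $z(P_{\ell+u})=z(\tau^{u}(P_{\ell}))=\tau^{-u}(z)(P_{\ell})$. The first equality is formal: $P_{\ell+u}=\tau^{\ell+u}(P)=\tau^{u}\!\bigl(\tau^{\ell}(P)\bigr)=\tau^{u}(P_{\ell})$. For the second I would apply Lemma~\ref{1}(3) with automorphism $\sigma=\tau^{-u}$, function $z$, and place $\tau^{u}(P_{\ell})$, obtaining $\tau^{-u}(z)\bigl(\tau^{-u}(\tau^{u}(P_{\ell}))\bigr)=z\bigl(\tau^{u}(P_{\ell})\bigr)$, i.e. $\tau^{-u}(z)(P_{\ell})=z\bigl(\tau^{u}(P_{\ell})\bigr)$. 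The hypothesis of Lemma~\ref{1}(3) that has to be checked is $v_{\tau^{u}(P_{\ell})}(z)\ge0$, namely that the rational place $\tau^{u}(P_{\ell})$ is not a pole of $z$; this holds because $z\in V\setminus\{0\}\subseteq\mathcal{L}(Q)$ has $Q$ as its only pole and $\deg Q=d\ge2$, so $Q$ is not a rational place.

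I expect the only genuinely delicate point to be this last observation: the symbols $z(P_{\ell+u})$ make sense, and the transport formula of Lemma~\ref{1}(3) applies, precisely because $z$ has no rational pole, which is where the standing hypothesis $\deg Q=d\ge2$ (as opposed to $d=1$) is used. Everything else is bookkeeping with the finite cyclic group $\langle\tau\rangle=T_{E}$ and its free transitive action on $E(\mathbb{F}_{q})$.
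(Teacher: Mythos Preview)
Your proof is correct. Note, however, that the paper itself does not prove this lemma: it is stated with the attribution ``see \cite{5}'' and no argument is given. So there is no proof in the paper against which to compare yours.

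Your interpretation of the somewhat loose statement---reading $\tau$ as a generator of the cyclic translation group $T_{E}$ and $n$ as its order $q+1+t$---is exactly the one the surrounding text intends, and is necessary for the periodicity and distinctness claims to hold (a general $\tau\in\mathrm{Aut}(E/\mathbb{F}_{q})$ would not do). Your reduction of the evaluation identity to Lemma~\ref{1}(3) is clean; alternatively, one could invoke the paper's explicit definition $\tau(z)(P)=z(\tau^{-1}(P))$ directly, which gives $\tau^{-u}(z)(P_{\ell})=z(\tau^{u}(P_{\ell}))$ in one line without the valuation check. Your observation that $d\ge2$ guarantees $z$ has no rational pole, so that the evaluations are well defined, is a genuine point that the lemma statement glosses over.
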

Based on the lemmas above, we have the following results of the correlation and the balance for this new family of binary sequences $\mathcal{S}=\{\mathbf{s}_{i}: 1\leq i\leq q^{d-1}-1\}$. It turns out that the estimations of the balance and correlation are transformed into the analysis of the number of rational places for two classes of Kummer extensions.

First, for cryptographic applications, a binary sequence must have favorable balance, \textit{i.e.}, the number of $1$s and $0$s are close. Now we give a formal definition for this property.
\begin{definition}
    Denoted by $T_{0}(\mathbf{s})$ and $T_{1}(\mathbf{s})$ the number of $0$s and $1$s in a binary sequence $\mathbf{s}$ respectively. The binary sequence $\mathbf{s}$ is called balanced if 
    \begin{displaymath}
|T_{1}(\mathbf{s})-T_{0}(\mathbf{s})|=|2T_{1}(\mathbf{s})-N|=|2T_{0}(\mathbf{s})-N|\leq 1
    \end{displaymath}
    where $N$ is the length of the sequence $\mathbf{s}$.
\end{definition}

Now we define $\Delta(\mathcal{S}):=\max\{|T_{1}(\mathbf{s})-T_{0}(\mathbf{s})|\ : \ \mathbf{s}\in\mathcal{S}\}$ and $\Delta(\mathcal{S})$ is called the balance for the family of sequences $\mathcal{S}$.  

 \begin{theorem}\label{balance}
         Let $q=p^{n}$ for an odd prime $p$ and a positive integer $n$. Let $E/\mathbb{F}_{q}$ be a cyclic elliptic function field defined over $\mathbb{F}_{q}$ with $q+1+t$ rational places. Let $F$ be the fixed subfield of $E$ with respect to its translation group $T_{E}$ generated by $\sigma_{P}$. Let $Q$ be a place of $E$ with $\deg(Q)=d\geq 2$ such that $\gcd(d,q+1+t)=1$. Then the family of binary sequences $\mathcal{S}=\{\mathbf{s}_{i}: 1\leq i\leq q^{d-1}-1\}$ constructed above is with length $q+1+t$, size $q^{d-1}-1$ and balance bounded by
         \begin{equation}\label{111111}
\Delta(\mathcal{S})\leq(d+1)\cdot\lfloor2\sqrt{q}\rfloor+|t|+d.
          \end{equation}
            \end{theorem}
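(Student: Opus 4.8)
The plan is to rewrite the balance $N_1''(\mathbf{s}_i)-N_0''(\mathbf{s}_i)$ as a quadratic character sum over the rational places of $E$, and then evaluate that sum by counting rational places on the quadratic Kummer extension $E'_i:=E(y)$ with $y^{2}=z_i$, invoking the genus estimate \eqref{3333} together with the Serre bound. Throughout write $\psi(a):=(-1)^{\eta(a)}$, so that $\psi(a)=\chi(a)$, the quadratic character of $\mathbb{F}_q^{*}$, for $a\neq 0$, while $\psi(0)=1$; note also $\psi(a_1a_2)=\psi(a_1)\psi(a_2)$ for $a_1,a_2\in\mathbb{F}_q^{*}$ by the additivity of $\eta$. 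Since $\sigma_P$ generates $T_E$ of order $q+1+t$, the set $\{P_0,\dots,P_{q+t}\}$ is exactly the set of all rational places of $E$, and because $(z_i)_\infty=Q$ with $\deg Q=d\ge 2$, no $P_j$ is a pole of $z_i$; hence $z_i(P_j)\in\mathbb{F}_q$ is well defined for every $j$, and a direct count of $0$s versus $1$s gives
\[
N_1''(\mathbf{s}_i)-N_0''(\mathbf{s}_i)=-\sum_{j=0}^{q+t}(-1)^{\eta(z_i(P_j))}=-\sum_{j=0}^{q+t}\psi\bigl(z_i(P_j)\bigr).
\]

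Next I would relate this sum to $N(E'_i)$. Since $v_Q(z_i)=-1$ is coprime to $2$, Proposition \ref{3} applies: $E'_i/E$ is a genuine cyclic Kummer extension of degree $2$ with full constant field $\mathbb{F}_q$, and, by the argument proving \eqref{3333}, $g(E'_i)\le 1+d$. For each rational place $P_j$ of $E$ let $a_j\in\{0,1,2\}$ be the number of rational places of $E'_i$ lying above it (every rational place of $E'_i$ lies above a rational place of $E$). If $z_i(P_j)\neq 0$, then $v_{P_j}(z_i)=0$, so $P_j$ is unramified, and $a_j=2$ or $a_j=0$ according as $z_i(P_j)$ is a square or a nonsquare in $\mathbb{F}_q$; thus $a_j=1+\psi(z_i(P_j))$ in this case. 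If $z_i(P_j)=0$, then $\psi(z_i(P_j))=1$, and separating the cases $v_{P_j}(z_i)$ odd (ramified, $a_j=1$) and $v_{P_j}(z_i)$ even (unramified, $a_j\in\{0,2\}$ depending on whether the unit part of $z_i$ reduces to a square), one checks in all cases that $0\le 2-a_j\le v_{P_j}(z_i)$. Summing over all $q+1+t$ rational places and writing $Z_i:=\{\,j:z_i(P_j)=0\,\}$, so that $\sum_{j}\psi(z_i(P_j))=\sum_{j\notin Z_i}\chi(z_i(P_j))+|Z_i|$, these local contributions combine to
\[
N(E'_i)=(q+1+t)+\sum_{j=0}^{q+t}\psi\bigl(z_i(P_j)\bigr)-\sum_{j\in Z_i}(2-a_j).
\]

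Finally I would combine the two displays. The defect term satisfies $0\le\sum_{j\in Z_i}(2-a_j)\le\sum_{j\in Z_i}v_{P_j}(z_i)\le\deg\bigl((z_i)_0\bigr)=\deg(Q)=d$, and the Serre bound with $g(E'_i)\le 1+d$ gives $|N(E'_i)-(q+1)|\le (d+1)\cdot\lfloor 2\sqrt q\rfloor$. Hence
\[
\bigl|N_1''(\mathbf{s}_i)-N_0''(\mathbf{s}_i)\bigr|=\Bigl|(q+1+t)-N(E'_i)-\sum_{j\in Z_i}(2-a_j)\Bigr|\le (d+1)\cdot\lfloor 2\sqrt q\rfloor+|t|+d,
\]
and taking the maximum over $1\le i\le q^{d-1}-1$ yields \eqref{111111}; the length $q+1+t$ and the count $q^{d-1}-1$ of functions $z_i$ are immediate from $\#\{P_0,\dots,P_{q+t}\}=q+1+t$ and $\dim_{\mathbb{F}_q}V=\ell(Q)-1=d-1$. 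I expect the one genuinely delicate point to be the bookkeeping at the zero places of $z_i$: specifically, handling the even‑valuation (unramified) zeros, where $P_j$ need not split, and showing that the total defect $\sum_{j\in Z_i}(2-a_j)$ is controlled by $\deg((z_i)_0)=d$ rather than only by the cruder bound $2|Z_i|$, which would be too weak.
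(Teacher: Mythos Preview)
Your proof is correct and follows essentially the same strategy as the paper: form the quadratic Kummer extension $E'_i=E(y)$ with $y^2=z_i$, use the genus bound $g(E'_i)\le d+1$ together with the Serre bound, and relate $N(E'_i)$ to $N_0''-N_1''$ via the splitting behaviour of the rational places $P_j$. Your treatment of the rational zeros is in fact slightly more careful than the paper's: where the paper writes $N(E'_i)=2N_0''-|\mathcal{T}(z_i)|$ (implicitly assuming each rational zero is ramified and so contributes exactly one rational place above), you allow for the possibility of an even-valuation rational zero of $z_i$ and control the resulting defect $\sum_{j\in Z_i}(2-a_j)$ by $\sum_{j\in Z_i} v_{P_j}(z_i)\le\deg((z_i)_0)=d$, which handles that edge case cleanly and still lands on the same final bound.
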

      \begin{proof}
Consider the Kummer extension $E_{i}'=E(y)$ with the defining equation
\begin{displaymath}
    E'_{i}: y^{2}=z_{i}
\end{displaymath}
for each $1\leq i\leq q^{d-1}-1$. By Lemma III.2, we have $\mathfrak{g}_{E'_{i}}\leq 1+d$.

Denote by $T_{0}(\mathbf{s}_{i})$ the cardinality of the set $\mathcal{L}_{0}=\{0\leq j\leq q+t : \eta(z_{i}(P_{j}))=0\}$ and $T_{1}(\mathbf{s}_{i})$ the cardinality of the set $\mathcal{L}_{1}=\{0\leq j\leq q+t : \eta(z_{i}(P_{j}))=1\}$ for each $1\leq i\leq q^{d-1}-1$. By the relation $T_{0}(\mathbf{s}_{i})+T_{1}(\mathbf{s}_{i})=q+1+t$, we have 
\begin{displaymath}
\begin{split}
    |T_{0}(\mathbf{s}_{i})-T_{1}(\mathbf{s}_{i})|&=|2T_{1}(\mathbf{s}_{i})-(q+1+t)|\\
    &=|2T_{0}(\mathbf{s}_{i})-(q+1+t)|,
    \end{split}
    \end{displaymath}
then we only need to consider the cardinality $T_{0}(\mathbf{s}_{i})$.

By the definition of quadratic residue map $\eta$ and the properties of Kummer extensions, if $j\in\mathcal{L}_{0}$, then $P_{j}$ splits into 2 rational places in $E'_{i}$ or there is only one rational place lying over $P_{j}$ in $E'_{i}$. If $j\in\mathcal{L}_{1}$, then any place lying over $P_{j}$ has degree two.

                   On the other hand, we have $N(E'_{i})=2T_{0}(\mathbf{s}_{i})-|\mathcal{T}(z_{i})|$ for $1\leq i\leq q^{d-1}-1$ and $\mathcal{T}(z_{i})$ denotes the set of rational zeros of $z_{i}$. From the Serre bound, we have the estimation for the number of rational places over the function field $E'_{i}$ and it is given by
                  \begin{displaymath}
                  \begin{split}
                      |N(E'_{i})-q-1|&\leq \mathfrak{g}_{E'_{i}}\cdot\lfloor2\sqrt{q}\rfloor\\
                      &\leq (d+1)\cdot\lfloor2\sqrt{q}\rfloor.
                      \end{split}
                      \end{displaymath}

Then we have the following estimation:
\begin{displaymath}
\begin{split}
|T_{1}(\mathbf{s}_{i})-T_{0}(\mathbf{s}_{i})|&=\left|2T_{0}(\mathbf{s}_{i})-(q+1+t)\right|\\
    &=|N(E_{i}')+|\mathcal{T}(z_{i})|-(q+1+t)|\\
    &\leq |N(E'_{i})-q-1|+|t|+|\mathcal{T}(z_{i})|\\
    &\leq (d+1)\cdot\lfloor2\sqrt{q}\rfloor+|t|+d,
    \end{split}
    \end{displaymath}
for any $1\leq i\leq q^{d-1}-1$. Therefore we have the desired results.
\end{proof}

Based on the results above, we have the following upper bound for the correlation.
     \begin{theorem}\label{8}
         Let $q=p^{n}$ for odd prime $p$ and a positive integer $n$. Let $E/\mathbb{F}_{q}$ be a cyclic elliptic function field defined over $\mathbb{F}_{q}$ with $q+1+t$ rational places. Let $F$ be the fixed subfield of $E$ with respect to its translation group $T_{E}$ generated by $\sigma_{P}$. Let $Q$ be a place of $E$ with $\deg(Q)=d\geq 2$ such that $\gcd(d,q+1+t)=1$. Then the family of binary sequences $\mathcal{S}=\{\mathbf{s}_{i}: 1\leq i\leq q^{d-1}-1\}$ constructed above is with length $q+1+t$, size $q^{d-1}-1$ and the correlation upper bounded by
         \begin{displaymath}
         \begin{split}
             \mathrm{Cor}(\mathcal{S})
             &\leq (2d+1)\cdot\lfloor2\sqrt{q}\rfloor+|t|+2d.
             \end{split}
             \end{displaymath}
     \end{theorem}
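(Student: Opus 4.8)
The plan is to treat the autocorrelations $A_u(\mathbf{s}_j)$ for $1\le u\le N-1$ and the cross correlations $C_u(\mathbf{s}_j,\mathbf{s}_k)$ for $\mathbf{s}_j\ne\mathbf{s}_k$, $0\le u\le N-1$, simultaneously, where $N=q+1+t$. Writing $\tau=\sigma_P$ and using the identity $z(P_{\ell+u})=\tau^{-u}(z)(P_\ell)$ recalled from \cite{5}, one has
\[
C_u(\mathbf{s}_j,\mathbf{s}_k)=\sum_{\ell=0}^{q+t}(-1)^{\eta(z_j(P_\ell))+\eta(\tau^{-u}(z_k)(P_\ell))},
\]
the autocorrelation being the case $k=j$. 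Put $w:=z_j\cdot\tau^{-u}(z_k)=z_j\cdot\sigma_P^{-u}(z_k)$. By Lemma~\ref{2}, $(w)_\infty=Q+\sigma_P^{-u}(Q)$ has degree $2d$; since $\deg Q=d\ge 2$ this pole divisor is supported away from the rational places, so every summand above is well defined, and whenever $u\not\equiv 0\pmod N$ the places $Q$ and $\sigma_P^{-u}(Q)$ are distinct (the orbit of $Q$ under $\langle\sigma_P\rangle$ has length $N$), which forces $w\notin E^{2}$. In the remaining case $u=0$, $j\ne k$, one checks that the hypothesis $\mathbf{s}_j\ne\mathbf{s}_k$ rules out $z_jz_k\in E^{2}$. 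Hence $E_{j,k,u}:=E(y)$ with $y^{2}=w$ is a Kummer extension of $E$ of degree $2$.

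Next I translate the correlation into a point count on $E_{j,k,u}$. For a rational place $P_\ell$ of $E$ with $w(P_\ell)\ne 0$, both factors are nonzero, so by $\eta(a_1a_2)=\eta(a_1)+\eta(a_2)$ the summand equals the quadratic character $\chi(w(P_\ell))\in\{\pm1\}$, and $P_\ell$ splits into $1+\chi(w(P_\ell))\in\{0,2\}$ rational places of $E_{j,k,u}$. For a rational place $P_\ell$ with $w(P_\ell)=0$ the summand is some $\xi_\ell\in\{\pm1\}$, while $P_\ell$ lies below $c_\ell\in\{0,1,2\}$ rational places, with $c_\ell=1$ precisely when $v_{P_\ell}(w)$ is odd. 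Since every rational place of $E_{j,k,u}$ lies over a rational place of $E$, and the degree-$d$ poles $Q,\sigma_P^{-u}(Q)$ carry no rational place, summing over $\ell$ gives
\[
C_u(\mathbf{s}_j,\mathbf{s}_k)=N(E_{j,k,u})-(q+1+t)+\sum_{w(P_\ell)=0}\bigl(\xi_\ell-(c_\ell-1)\bigr).
\]
To bound the error sum, observe that $|\xi_\ell-(c_\ell-1)|=1$ at a zero of $w$ of odd order and $\le 2$ at one of even order; by Riemann--Roch $\deg(w)_0=\deg(w)_\infty=2d$, so if the rational zeros of $w$ consist of $a$ of odd order and $b$ of even order then $a+2b\le 2d$, and the error sum is bounded in absolute value by $a+2b\le 2d$.

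Finally I bound $g(E_{j,k,u})$. By Proposition~\ref{3}(2)--(3), a place $P$ of $E$ ramifies in $E_{j,k,u}$ exactly when $v_P(w)$ is odd, with different exponent $1$; the poles $Q,\sigma_P^{-u}(Q)$ contribute $2d$ to $\deg\mathrm{Diff}(E_{j,k,u}/E)$ and the odd-order zeros of $w$ contribute at most $\deg(w)_0=2d$, so with $g(E)=1$ the Hurwitz genus formula yields $2g(E_{j,k,u})-2\le 4d$, that is $g(E_{j,k,u})\le 2d+1$. The Serre bound then gives $|N(E_{j,k,u})-q-1|\le(2d+1)\cdot\lfloor2\sqrt q\rfloor$, and combining with the displayed identity,
\[
|C_u(\mathbf{s}_j,\mathbf{s}_k)|\le |N(E_{j,k,u})-q-1|+|t|+2d\le(2d+1)\cdot\lfloor2\sqrt q\rfloor+|t|+2d,
\]
uniformly over all delays and all pairs occurring in the definition of $\mathrm{Cor}(\mathcal{S})$, which is the asserted bound.

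The step I expect to be the most delicate is the error analysis at the zeros of $w$: there the summand of $C_u$ and the local number of rational places of $E_{j,k,u}$ need not agree, and getting the stated constant $2d$ (rather than the crude $2\cdot\#\{\text{rational zeros of }w\}\le 4d$) hinges on the bookkeeping $a+2b\le\deg(w)_0=2d$ together with the fact that an even-order zero is exactly where the mismatch can reach $2$. A secondary subtlety is certifying $w\notin E^{2}$ in the degenerate case $u=0$, $j\ne k$, where one must exploit that $\mathbf{s}_j$ and $\mathbf{s}_k$ are genuinely distinct members of the family and the condition $\deg Q=d\ge 2$.
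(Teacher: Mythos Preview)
Your proof is correct and follows the same route as the paper: form the Kummer extension $E(y)$ with $y^{2}=z_{j}\tau^{-u}(z_{k})$, bound its genus by $2d+1$ via the Hurwitz genus formula, and apply the Serre bound to $|N(E_{j,k,u})-q-1|$. Your bookkeeping at the rational zeros of $w$ is in fact more careful than the paper's: you separate odd-- and even--order zeros and use $a+2b\le\deg(w)_{0}=2d$, whereas the paper writes $N(E_{i,j})=2N_{0}'-|\mathcal{T}(w)|$ as an exact identity and tacitly uses $\eta(a)+\eta(b)=\eta(ab)$ even at places where one factor vanishes---neither of which is literally correct when $w$ has rational zeros of even order. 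Both arguments arrive at the same additive constant $2d$, but yours is the rigorous version. One minor imprecision: when $u=0$ the pole $Q=\sigma_{P}^{-u}(Q)$ has $v_{Q}(w)=-2$ and so contributes $0$, not $2d$, to the different; this only strengthens your genus bound, so the conclusion is unaffected. The degenerate case $u=0$ with $z_{j}z_{k}\in E^{2}$ that you flag is indeed glossed over in the paper as well; your observation that distinctness of $\mathbf{s}_{j}$ and $\mathbf{s}_{k}$ should be used here is the right instinct, though a full justification (handling the possibility that $z_{j}$ has a rational zero of even order) would require a few more lines.
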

         \begin{proof}
             Let $\tau=\sigma_{P}$ be the generator of the translation group $T_{E}$. We begin by computing the autocorrelation of the family of binary sequences $\mathcal{S}=\{\mathbf{s}_{i} : 1\leq i\leq q^{d-1}-1\}$. 
             
             According to the definition, for each $1\leq i\leq q^{d-1}-1$, the autocorrelation of $\mathbf{s}_{i}$ at delay $u$ with $1\leq u\leq q+t$ is given by
             \begin{displaymath}
             \begin{split}
                 \mathrm{A}_{u}(\mathbf{s}_{i})&=\sum^{q+t}_{j=0}(-1)^{s_{i,j}+s_{i,j+u}}
                 =\sum^{q+t}_{j=0}(-1)^{\eta(z_{i}(P_{j}))+\eta(z_{i}(P_{j+u}))}\\
                 &=\sum^{q+t}_{j=0}(-1)^{\eta(z_{i}(P_{j})+\eta(\tau^{-u}(z_{i})(P_{j}))}\\
                  &=\sum^{q+t}_{j=0}(-1)^{\eta(z_{i}\tau^{-u}(z_{i})(P_{j}))}.\\
                  \end{split}
                 \end{displaymath}
                 
                 The third equality follows from Lemma \ref{1}. Since $z_{i}\in V\setminus\{0\}$ and $1\leq u\leq q+t$, the pole divisor of $z_{i}\cdot \tau^{-u}(z_{i})$ is given by $(z_{i}\tau^{-u}(z_{i}))_{\infty}=Q+\tau^{-u}(Q)$ by Lemma \ref{2}. Consider the Kummer extension $E_{i}=E(y)$ with defining equation
                 \begin{displaymath}
                   E_{i}:  y^{2}=z_{i}\cdot\tau^{-u}(z_{i}).
                 \end{displaymath}
                 
                 Denote by $\mathcal{Z}(z_{i})$ the zeros of the rational function $z_{i}$ for $1\leq i\leq q^{d-1}-1$. By Lemma \ref{1} and Lemma III.2, we know that the places $Q$, $\tau^{-u}(Q)$, and zeros $\mathcal{Z}(z_{i})$ are all possible ramified places in $E_{i}/E$. From the Hurwitz genus formula, we have
                 \begin{displaymath}
                     2\mathfrak{g}_{E_{i}}-2\leq 2(2\mathfrak{g}_{E}-2)+\deg(Q)+\deg(\tau^{-u}(Q))+2\sum_{P\in\mathcal{Z}(z_{i})}\deg(P).
                 \end{displaymath}
                 Then the genus of $E_{i}$ satisfies
                 \begin{displaymath}
                     \mathfrak{g}_{E_{i}}\leq 2d+1
                     \end{displaymath}
                  for each $1\leq i\leq q^{d-1}-1$.

                  Denote by $N_{0}$ the cardinality of the set $\mathcal{S}_{0}=\{0\leq j\leq q+t : \eta((z_{i}\tau^{-u}(z_{i})(P_{j}))=0    \}$ and $N_{1}$ the cardinality of the set $\mathcal{S}_{1}=\{0\leq j\leq q+t : \eta((z_{i}\tau^{-u}(z_{i})(P_{j}))=1    \}$. It is clear that 
                  \begin{displaymath}
                      N_{0}+N_{1}=q+1+t.
                  \end{displaymath}

By the definition of the quadratic residue map $\eta$ and the properties of Kummer extensions, if $j\in\mathcal{S}_{0}$, then $P_{j}$ splits into 2 rational places in $E_{i}$ or there is only one rational place lying over $P_{j}$ in $E_{i}$. If $j\in\mathcal{S}_{1}$, then any place lying over $P_{j}$ has degree two.

                  Denote by $\mathcal{T}(z_{i})$ the set of rational zeros of $z_{i}$, then the number of the rational points of $E_{i}$ can be given by $N(E_{i})=2N_{0}-|\mathcal{T}(z_{i}\tau^{-u}(z_{i}))|$ for each $1\leq i\leq q^{d-1}-1$. 
                  
                  From the Serre bound, we have
                  \begin{displaymath}
                  \begin{split}
                      |N(E_{i})-q-1|&\leq \mathfrak{g}_{E_{i}}\cdot\lfloor2\sqrt{q}\rfloor\\
                      &\leq \left(2d+1\right) \cdot\lfloor2\sqrt{q}\rfloor.
                      \end{split}
                      \end{displaymath}
                   Therefore, we obtain the upper bound for the auto-correlation
                   \begin{displaymath}
                   \begin{split}
                       |\mathrm{A}_{u}(\mathbf{s}_{i})|&=|N_{0}-N_{1}|=|2N_{0}-q-1-t|\\
                       &\leq|N(E_{i})-q-1|+|\mathcal{T}(z_{i}\tau^{-u}(z_{i}))|+|t| \\
                       &\leq \left(2d+1\right) \cdot\lfloor2\sqrt{q}\rfloor+|t|+ |\mathcal{T}(z_{i}\tau^{-u}(z_{i}))|\\
                       &\leq  \left(2d+1\right) \cdot\lfloor2\sqrt{q}\rfloor+|t|+ 2|\mathcal{T}(z_{i})|\ (\text{by Lemma \ref{1}})\\                    &\leq  \left(2d+1\right) \cdot\lfloor2\sqrt{q}\rfloor+|t|+ 2|\mathcal{Z}(z_{i})| \\                        &\leq  \left(2d+1\right) \cdot\lfloor2\sqrt{q}\rfloor+|t|+2d.
                       \end{split}
                       \end{displaymath}
                            The last inequality holds from the fact that $z_{i}\in\mathcal{L}(Q)$, thus we have $|\mathcal{T}(z_{i})|\leq|\mathcal{Z}(z_{i})|\leq d$ for each $1\leq i\leq q^{d-1}-1 $. 
                  
                  Now we calculate the cross-correlation for the family of sequences. For two arbitrary distinct sequences $\mathbf{s}_{i}$ and $\mathbf{s}_{j}$ in $\mathcal{S}$ for $1\leq i\neq j\leq q^{d-1}-1$, the cross-correlation of $\mathbf{s}_{i}$ and $\mathbf{s}_{j}$ at delay $u$ with $0\leq u\leq q+t$ is given by
                  \begin{displaymath}
                  \begin{split}
                      \mathrm{C}_{u}(\mathbf{s}_{i},\mathbf{s}_{j})&=\sum^{q+t}_{k=0}(-1)^{\eta(z_{i}(P_{k}))+\eta(z_{j+u}(P_{k}))}\\
                      &=\sum^{q+t}_{k=0}(-1)^{\eta(z_{i}\tau^{-u}(z_{j})(P_{k}))}.
                      \end{split}
                  \end{displaymath}
    
    Consider the Kummer extension $E_{i,j}=E(y)$ with defining equation
    \begin{displaymath}
     E_{i,j}:  y^{2}=z_{i}\cdot\tau^{-u}(z_{j}).
        \end{displaymath}

If $1\leq u\leq q+t$, then we have $(z_{i}\tau^{-u}(z_{j}))_{\infty}=Q+\tau^{-u}(Q)$ by Lemma \ref{1}. From the Hurwitz genus formula, we have
\begin{displaymath}
\begin{split}
    2\mathfrak{g}_{E_{i,j}}-2&\leq 2(2\mathfrak{g}_{E}-2)+\deg(Q)+\deg(\tau^{-u}(Q))+\sum_{P\in\mathcal{Z}(z_{i})}\deg(P)+\sum_{P\in\mathcal{Z}(z_{j})}\deg(P)\\
    &\leq\deg(Q)+\deg(\tau^{-u}(Q))+2d=4d.
    \end{split}
\end{displaymath}

Hence, the genus of $E_{i,j}$ satisfies
\begin{displaymath}
    \mathfrak{g}_{E_{i,j}}\leq 2d+1   
    \end{displaymath}
for $1\leq i\neq j\leq q^{d-1}-1$. If $u=0$, then we have $(z_{i}\tau^{-u}(z_{j}))_{\infty}=(z_{i}z_{j})_{\infty}=2Q$ by Lemma \ref{2} and the genus of $E_{i,j}$ is also upper bounded by $\mathfrak{g}_{E_{i,j}}\leq 2d+1$.

Let $N'_{0}$ denote the cardinality of the set $\mathcal{S}_{0}'=\{0\leq k\leq q+t : \eta(z_{i}\tau^{-u}(z_{j})(P_{k}))=0  \}$ and $N'_{1}$ denote the cardinality of the set $\mathcal{S}'_{1}=\{0\leq k\leq q+t : \eta(z_{i}\tau^{-u}(z_{j})(P_{k}))=1  \}$. It is clear that
\begin{displaymath}
    N'_{0}+N'_{1}=q+1+t.
\end{displaymath}
Also by the definition of quadratic residue map $\eta$ and the properties of Kummer extensions, if $j\in\mathcal{S}'_{0}$, then $P_{j}$ splits into 2 rational places in $E_{i,j}$ or there is only one rational place lying over $P_{j}$ in $E_{i,j}$. If $j\in\mathcal{S}'_{1}$, then any place lying over $P_{j}$ has degree two.

                   On the other hand, $N(E_{i,j})=2N'_{0}-|\mathcal{T}(z_{i}\tau^{-u}(z_{j}))|$ for $1\leq u\leq q+t$. From the Serre bound, we have the estimation for the number of the rational places on the function field $E_{i,j}$
                  \begin{displaymath}
                  \begin{split}
                      |N(E_{i,j})-q-1|&\leq \mathfrak{g}_{E_{i,j}}\cdot\lfloor2\sqrt{q}\rfloor\\
                      &\leq (2d+1)\cdot\lfloor2\sqrt{q}\rfloor.
                      \end{split}
                      \end{displaymath}
                   Therefore, we obtain the following upper bound for the cross-correlation
                   \begin{displaymath}
                   \begin{split}
                       |\mathrm{C}_{u}(\mathbf{s}_{i},\mathbf{s}_{j})|&=|N'_{0}-N'_{1}|=|2N'_{0}-q-1-t|\\
                       &\leq|N(E_{i,j})-q-1|+|\mathcal{T}(z_{i}\tau^{-u}(z_{j}))|+|t| \\
                       &\leq \left(2d+1\right) \cdot\lfloor2\sqrt{q}\rfloor+|t|+ |\mathcal{T}(z_{i}\tau^{-u}(z_{j}))|\\
                       &\leq  \left(2d+1\right) \cdot\lfloor2\sqrt{q}\rfloor+|t|+ |\mathcal{T}(z_{i})|+|\mathcal{T}(z_{j})|\\                    
                       &\leq  \left(2d+1\right) \cdot\lfloor2\sqrt{q}\rfloor+|t|+2d
                       \end{split}
                       \end{displaymath}
for all $0\leq u\leq q+t$.

Combining the results of  $|\mathrm{A}_{u}(\mathbf{s}_{i})|$      and $|\mathrm{C}_{u}(\mathbf{s}_{i},\mathbf{s}_{j})|$ for $1\leq i\neq j\leq q^{d-1}-1$, we have the desired results.
 \end{proof}

\begin{corollary}\label{c0}
$1)$ Let $n$ be an odd positive integer, $q=3^{n}$ and $t=\pm 3^{(n+1)/2}$. Then there exists a family of binary sequences with length $q+1+t$, size $q-1$, balance upper bounded by $\Delta(\mathcal{S})\leq 3\cdot\lfloor2\cdot 3^{n/2}\rfloor+3^{(n+1)/2}+2$ and the correlation upper-bounded by $\mathrm{Cor}(\mathcal{S})\leq 5\cdot\lfloor2\cdot 3^{n/2}\rfloor+3^{(n+1)/2}+4$.

    $2)$ Let $n$ be an odd positive integer, $q=3^{n}$ and $t=-1$. Then there exists a family of binary sequences with length $q$, size $q-1$, balance upper bounded by $\Delta(\mathcal{S})\leq 3\cdot\lfloor2\cdot 3^{n/2}\rfloor+2$ and the correlation upper-bounded by $\mathrm{Cor}(\mathcal{S})\leq 5\cdot\lfloor2\cdot 3^{n/2}\rfloor+4$.
    
   $3)$ Let $q=5^{n}$ for a positive integer $n$. Let $t=\pm 5^{n/2}$ and $n$ be even. Then there exists a family of binary sequences with length $q+1+t$, size $q-1$, balance upper bounded by $\Delta(\mathcal{S})\leq 3\cdot\lfloor2\cdot 5^{n/2}\rfloor+|t|+2=7\cdot 5^{n/2}+2$ and the correlation upper bounded by $\mathrm{Cor}(\mathcal{S})\leq 5\cdot\lfloor2\cdot 5^{n/2}\rfloor+|t|+4=11\cdot 5^{n/2}+4$.
\end{corollary}
    \begin{proof}
   $1)$ For $t=\pm 3^{(n+1)/2}$, there is a cyclic elliptic function field with $q+1+t$ rational places according to Lemma~\ref{123}. By Lemma~\ref{l6}, there exists a place $Q$ in $E^{T_{E}}$ of degree $\deg(Q)=2$ that splits completely in $E$ as $\gcd(2,3^{n}+1+t)=1$. Thus, the corollary follows from Theorems  \ref{balance} and \ref{8}. 
   
        $2)$  For $t=-1$, there is a cyclic elliptic function field with $q+1+t=q$ rational places according to Lemma~\ref{123}. By Lemma~\ref{l6}, there exists a place $Q$ in $E^{T_{E}}$ of degree $\deg(Q)=2$ that splits completely in $E$ as $\gcd(2,3^{n})=1$.  Thus, the corollary follows from Theorems  \ref{balance} and \ref{8}.     
        
        $3)$ For $t=\pm 5^{n/2}$, there is a cyclic elliptic function field with $q+1+t$ rational places according to Lemma~\ref{123}. By Lemma~\ref{l6}, there exists a place $Q$ in $E^{T_{E}}$ of degree $\deg(Q)=2$ that splits completely in $E$ for the fact that $\gcd(2,5^{n}+1+t)=1$.  Thus, the corollary follows from Theorems  \ref{balance} and \ref{8} as well.     
    \end{proof}

\subsection{The Linear Complexity}
   In this subsection, we shall show that the family of binary sequences $\mathcal{S}=\{\mathbf{s}_{i} : 1\leq i\leq q^{d-1}-1\}$ also has a large linear complexity.
   \begin{theorem}
       The linear complexity of the family of binary sequences $\mathcal{S}=\{\mathbf{s}_{i} : 1\leq i\leq q^{d-1}-1\}$ satisfies
       \begin{displaymath}
           \mathrm{LC}(\mathcal{S})\geq\frac{q+1+2t-d-(d+1)\cdot\lfloor2\sqrt{q}\rfloor}{d+d\cdot\lfloor2\sqrt{q}\rfloor}.               \label{222}
           \end{displaymath}     \end{theorem}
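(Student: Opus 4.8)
The plan is the standard function-field argument: assume $\mathrm{LC}(\mathcal{S})$ is small and contradict the Serre bound applied to a suitable Kummer extension. Let $\mathbf{s}_i\in\mathcal{S}$ be a sequence realizing $\mathrm{LC}(\mathcal{S})$, put $\ell:=\ell(\mathbf{s}_i)$ and write $L:=\lfloor 2\sqrt{q}\rfloor$; we may assume $\ell<q+1+t$, as the right-hand side of the asserted inequality is smaller than $q+1+t$. By the definition of linear complexity there are $\lambda_0,\dots,\lambda_\ell\in\mathbb{F}_2$ with $\lambda_0=\lambda_\ell=1$ and $\sum_{k=0}^{\ell}\lambda_k s_{i,k+u}=0$ for all $0\le u\le q+t$; set $J=\{k:\lambda_k=1\}$, so that $0,\ell\in J$ and $|J|\le\ell+1$. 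Since $(-1)^{\eta(a)}$ equals the quadratic character of $a$ for $a\in\mathbb{F}_{q}^{*}$ and equals $1$ for $a=0$, and since $z_i(P_{k+u})=\tau^{-k}(z_i)(P_u)$ with $\tau=\sigma_P$ by Lemma~\ref{1}, the recurrence is equivalent to
\[
\prod_{k\in J}(-1)^{\eta\!\left(\tau^{-k}(z_i)(P_u)\right)}=1\qquad\text{for all }0\le u\le q+t .
\]

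Next I introduce the single function $g:=\prod_{k\in J}\tau^{-k}(z_i)\in E^{*}$ and the Kummer extension $E'=E(y)$ with $y^{2}=g$. The places $\tau^{-k}(Q)$, $k\in J$, are pairwise distinct (since $J\subseteq\{0,\dots,q+t\}$) and pairwise inequivalent; as $\tau^{-k}(Q)$ is the only pole of the factor $\tau^{-k}(z_i)$ and $\deg\tau^{-k}(Q)=d=\deg\bigl(\tau^{-k'}(z_i)\bigr)_0$, no such pole can be cancelled by a zero of another factor (that would force $(\tau^{-k'}(z_i))_0=\tau^{-k}(Q)$, hence $\tau^{-k}(Q)\sim\tau^{-k'}(Q)$, impossible for $k\ne k'$ because $\gcd(d,q+1+t)=1$ and $P$ generates $E(\mathbb{F}_q)$). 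Thus $g$ has a simple pole at each of the $|J|$ places $\tau^{-k}(Q)$ and no other pole, so $\deg(g)_0=\deg(g)_\infty=d|J|$ and, having a divisor with odd coefficients, $g$ is not a constant times a square; hence $E'/E$ is a proper degree-$2$ extension with full constant field $\mathbb{F}_q$. By Proposition~\ref{3} only the places with $v_P(g)$ odd ramify, each with different exponent $1$, so $\deg\mathrm{Diff}(E'/E)\le\deg(g)_0+\deg(g)_\infty\le 2d|J|$, and the Hurwitz formula with $g(E)=1$ gives $g(E')\le 1+d|J|$.

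I then bound $N(E')$ from below. Because $\deg Q=d\ge 2$, no rational place of $E$ is a pole of $g$, so every rational place $P_u$ has $v_{P_u}(g)\ge 0$, with $v_{P_u}(g)=0$ unless $P_u$ is a zero of $g$. If $v_{P_u}(g)=0$ then each $\tau^{-k}(z_i)(P_u)\in\mathbb{F}_q^{*}$, hence $g(P_u)=\prod_{k\in J}\tau^{-k}(z_i)(P_u)$ is a nonzero square by the displayed identity and $P_u$ splits completely in $E'$. A short case analysis in $r_{P_u}=\gcd(2,v_{P_u}(g))$ — splitting when $v_{P_u}(g)=0$; total ramification, which yields a rational place above, when $v_{P_u}(g)$ is odd; the unramified case when $v_{P_u}(g)\ge 2$ is even — shows that the number of rational places of $E'$ above any rational $P_u$ is at least $2-v_{P_u}(g)$. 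Summing over the $q+1+t$ rational places of $E$ and using $\sum_{P_u}v_{P_u}(g)\le\deg(g)_0=d|J|$ yields $N(E')\ge 2(q+1+t)-d|J|$.

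Combining with the Serre bound $N(E')\le q+1+g(E')L\le q+1+(1+d|J|)L$ gives $q+1+2t-L\le d|J|(1+L)\le d(\ell+1)(1+L)$, and solving for $\ell$ reproduces exactly
\[
\mathrm{LC}(\mathcal{S})=\ell\ \ge\ \frac{q+1+2t-L}{d(1+L)}-1\ =\ \frac{q+1+2t-d-(d+1)\lfloor 2\sqrt{q}\rfloor}{d+d\lfloor 2\sqrt{q}\rfloor}.
\]
The two points that need care — and that determine the final constant — are: verifying that $g$ has simple poles exactly at the $|J|$ distinct places $\tau^{-k}(Q)$, so that $E'/E$ is a genuine degree-$2$ (not constant-field) extension for which the Serre bound with $g(E')\le 1+d|J|$ applies; and the sharpened local estimate ``$\ge 2-v_{P_u}(g)$ rational places above $P_u$'', which retains the contribution of the ramified rational places and upgrades the naive bound $N(E')\ge 2\bigl((q+1+t)-d|J|\bigr)$ to $N(E')\ge 2(q+1+t)-d|J|$; it is this last improvement that makes the denominator come out as $d+d\lfloor 2\sqrt{q}\rfloor$ rather than something larger.
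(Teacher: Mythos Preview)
Your proof is correct and follows essentially the same route as the paper: form the product $g=\prod_{k\in J}\tau^{-k}(z_i)$, pass to the Kummer extension $y^2=g$, bound its genus by Hurwitz, and combine the Serre upper bound for $N(E')$ with a lower bound coming from the fact that $\eta(g(P_u))=0$ for every rational $P_u$. The algebra at the end is identical to the paper's.

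Where you differ is only in rigor, not in strategy. You explicitly verify that the places $\tau^{-k}(Q)$, $k\in J$, are distinct and cannot be cancelled by zeros of the other factors (using $\gcd(d,q+1+t)=1$ and the cyclicity of $E(\mathbb{F}_q)$), which guarantees that $g$ has simple poles and hence that $E'/E$ is a genuine degree-$2$ extension with $\mathbb{F}_q$ as full constant field; the paper uses the same Kummer extension but does not spell out this non-degeneracy check. You also replace the paper's count $N(L_i)\ge 2(q+1+t)-|\mathcal{T}(w_i)|$ by the cleaner local estimate ``at least $2-v_{P_u}(g)$ rational places above $P_u$'', which handles the even-order rational zeros correctly and sums directly to $N(E')\ge 2(q+1+t)-\deg(g)_0=2(q+1+t)-d|J|$. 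Both refinements lead to the same final inequality, so the approaches coincide.
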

   \begin{proof}
           We only need to prove that
           \begin{displaymath}
               \ell(\mathbf{s}_{i})\geq\frac{q+1+2t-d-(d+1)\cdot\lfloor2\sqrt{q}\rfloor}{d+d\cdot\lfloor2\sqrt{q}\rfloor}               
               \end{displaymath}
           for each sequence $\mathbf{s}_{i}\in\mathcal{S}, 1\leq i\leq q^{d-1}-1$.

  Denote by $\ell=\ell(\mathbf{s}_{i})$. Then there exist $\ell+1$ elements with $\lambda_{0}=\lambda_{\ell}=1$ and $\lambda_{1},\cdots,\lambda_{\ell-1}\in\mathbb{F}_{2}$ such that
           \begin{displaymath}
               \sum^{\ell}_{j=0}\lambda_{j}\cdot\eta(z_{i}(P_{j+u}))=0
           \end{displaymath}
           for any integer $u\geq 0$. Let $\tau=\sigma_{P}$ be a generator of the translation group $T_{E}$. By the definition of linear complexity, we have 
           \begin{displaymath}
           \begin{split}
               \sum^{\ell}_{j=0}\lambda_{j}\cdot\eta(z_{i}(P_{j+u}))
           &=\eta\left(\prod^{\ell}_{j=0}z_{i}^{\lambda_{j}}(P_{j+u})\right)\\
               &=\eta\left(\prod^{\ell}_{j=0}(z_{i})^{\lambda_{j}}(\tau^{j}(P_{u}))\right)\\               &=\eta\left(\prod^{\ell}_{j=0}(\tau^{-j}(z_{i}))^{\lambda_{j}}(P_{u})\right)\\
               &=\eta\left(w_{i}(P_{u})\right)\\          
               \end{split}
           \end{displaymath}
where $w_{i}=\prod^{\ell}_{j=0}(\tau^{-j}(z_{i}))^{\lambda_{j}}$. Hence, we obtain the vector 
\begin{displaymath}
    (\eta(w_{i}(P_{1})),\eta(w_{i}(P_{2})),\cdots,\eta(w_{i}(P_{q+1+t})))=\mathbf{0}.
\end{displaymath}
Consider the Kummer extension $L_{i}/E$ with defining equation
\begin{displaymath}
   L_{i}: y^{2}=\prod^{\ell}_{j=0}(\tau^{-j}(z_{i}))^{\lambda_{j}}=w_{i}.
\end{displaymath}
By Lemma \ref{2} and Lemma III.2, we have $(w_{i})_{\infty}=\sum^{\ell}_{j=0}\lambda_{j}\tau^{-j}(Q)$ and the genus can also be calculated by
\begin{displaymath}
\begin{split}
    2\mathfrak{g}_{L_{i}}-2&=2\cdot(2\mathfrak{g}_{E}-2)+\deg(\mathrm{Diff}(L_{i}/E))\\
    &=\deg(\mathrm{Diff}(L_{i}/F))\\
    &\leq (\ell+1)\left(d+\sum_{P\in\mathcal{Z}(z_{j})}\deg(P)\right) (\text{for some}\ 0\leq j\leq \ell)\\
    &\leq (\ell+1)\cdot 2d.
    \end{split}
\end{displaymath}
Then we have an upper bound for the genus $\mathfrak{g}_{L_{i}}$:
\begin{displaymath}
    \mathfrak{g}_{L_{i}}\leq(\ell+1)d+1.
    \end{displaymath}
It follows that 
\begin{displaymath}
   2(q+1+t)-|\mathcal{T}(w_{i})| \leq N(L_{i})\leq q+1+\left[(\ell+1)d+1\right]\cdot\lfloor2\sqrt{q}\rfloor,
\end{displaymath}
 where $\mathcal{T}(w_{i})$ denotes the set of rational zeros of $w_{i}$ for $1\leq i\leq q^{d-1}-1$. Since $|\mathcal{T}(w_{i})|\leq |\mathcal{Z}(w_{i})|$, we have
\begin{displaymath}
   2(q+1+t)-|\mathcal{Z}(w_{i})| \leq N(L_{i})\leq q+1+\left[(\ell+1)d+1\right]\cdot\lfloor2\sqrt{q}\rfloor.
\end{displaymath}
Furthermore, we have the following inequalities
\begin{displaymath}
\begin{split}
    2(q+1+t)&\leq|\mathcal{Z}(w_{i})|+q+1+\left[(\ell+1)d+1\right]\cdot\lfloor2\sqrt{q}\rfloor\\
    &\leq (\ell+1)\cdot\left(d+d\cdot\lfloor2\sqrt{q}\rfloor\right)+q+1+\lfloor2\sqrt{q}\rfloor.
    \end{split}
    \end{displaymath}
Then we have
\begin{displaymath}
    \ell(\mathbf{s}_{i})
    \geq\frac{q+1+2t-d-(d+1)\cdot\lfloor2\sqrt{q}\rfloor}{d+d\cdot\lfloor2\sqrt{q}\rfloor}\\    
\end{displaymath}
for all $1\leq i\leq q^{d-1}-1$.
       \end{proof}
     
\subsection{Comparison}
    
   In this subsection, we compare our construction with the binary sequences proposed by Jin \textit{et.al.} in \cite{3}. As we have mentioned in Section \ref{sec:1}, the main tools are taken from the theory of Kummer extensions while the main techniques are Artin-Schreier extensions in \cite{3}. Now we discuss the main differences in detail.
     
    The problems of computing correlation are transformed into estimating the number of rational places of certain Kummer extensions $E_{j,k}=E(y)$ where $y^{2}=z_{j}\cdot\tau^{-u}(z_{k})$ for $1\leq j,k\leq q^{d-1}-1, 0\leq u\leq q+t$ by the cyclic structure of the translation group $T_{E}$. Because our construction is based on the cyclic elliptic function fields over finite fields with odd characteristics, our construction can generate binary sequences with more flexible parameters.
     
 By contrast, the binary sequences~\cite{3} constructed from the finite field $\mathbb{F}_{2^{n}}$ utilize the trace map $\mathrm{Tr}:\mathbb{F}_{2^{n}}\to\mathbb{F}_{2}$ where the problem of computing the correlation is transformed into estimating the number of rational places of the Artin-Schreier extensions $E'_{j,k}=E(y)$ where $y^{2}+y=z_{j}+\tau^{-u}(z_{k})$ for $1\leq j,k\leq q^{d-1}-1, 0\leq u\leq q+t$.
 
In order to obtain a lower bound for the linear complexity, we need to estimate the rational places of the Kummer extensions 
 $ L_{i}: y^{2}=\prod^{\ell}_{j=0}(\tau^{-j}(z_{i}))^{\lambda_{j}}$ for $1\leq i\leq q^{d-1}-1$ in our construction. Instead, Jin \textit{et al.}~\cite{3} need to estimate the rational places of the Artin-Schreier extensions $ L_{i}: y^{2}+y=\sum^{\ell}_{j=0}(\tau^{-j}(z_{i}))^{\lambda_{j}}$ for $1\leq i\leq q^{d-1}-1$.

Note that the ramification behaviors are more complex in Kummer extension. Both the zero places and pole places of $z_{j}\cdot\tau^{-u}(z_{k})$ are ramified in the Kummer extensions while only the pole places of $z_{j}+\tau^{-u}(z_{k})$ are ramified in the Artin-Schreier extensions for $1\leq j,k\leq q^{d-1}-1, 0\leq u\leq q+t$, which induces the differences on the bounds.

    \subsection{Numerical Results}
  In this paper, we provide a novel method to construct a family of binary sequences with favorable balance, low correlation and large linear complexity by utilizing cyclic elliptic function fields over finite fields with odd characteristics. In this subsection, we provide some numerical results of our construction of binary sequences. %We achieve this by determining the automorphisms and rational places explicitly. 
  It turns out that the actual balance and correlation of such new binary sequences are significantly lower than the theoretical upper bound provided in Theorem \ref{8}. 
  
  In the following, we consider the cases for the place $\deg(Q)=d=2$ and the cyclic elliptic function fields over finite fields with characteristic $p=3$.

 First, we consider the balance. Let $q=3^{n}$. It is also obvious that $\gcd(2,q)=1$. By Corollary \ref{c0}, there exists a cyclic elliptic function field $E/\mathbb{F}_{q}$ with $q$ rational places. 
By Theorem \ref{balance}, we can construct a family of binary sequences with size $q-1$, length $q$ and balance $\Delta(\mathcal{S})$ upper bounded by $3\cdot\lfloor2\sqrt{q}\rfloor+2$.    We list some numerical results in Table \ref{tab:5} and it turns out that the actual value of $\Delta(\mathcal{S})$ is much smaller than the bound given in Theorem III.5.
\begin{table}[htbp]
        \centering
         \caption{$\substack{\text{Balance and Correlation of the first construction of sequences via the cyclic elliptic function fields}\\ \text{over finite fields with characteristic $p=3$}}$}    \begin{tabular}{|c|c|c|c|c|c|c|}
        \toprule
          Field Size& Length    & Family Size& Bound of Balance & Actual Balance&Bound of Correlation&Actual Correlation\\
          \midrule
          81&81&80& 56&17&94&43\\
          \hline
           243&243   & 242&95&55 &159&73 \\
           \hline
           729&729 &728&164 &65&274&131\\
           \hline
           2187&2187&2186&281&97&496&274\\
           \bottomrule
        \end{tabular}
        \label{tab:5}
    \end{table}
  
Now we consider the actual correlation. By the results in Theorem \ref{8}, the correlation is upper bounded by $5\cdot\lfloor2\sqrt{q}\rfloor+4$. We also list some results calculated by SageMath~\cite{sage} for this case in Table \ref{tab:5}. It turns out that the actual correlation is much lower than the theoretical bound.

Now consider another construction of the binary sequences from the cyclic elliptic function fields over $\mathbb{F}_{q}$. Let $t=\sqrt{q}$ if $n$ is even, or $\sqrt{3q}$ if $n$ is odd. By Lemma \ref{123}, we can construct a family of binary sequences with family size $q-1$, length $q+1+t$ and the correlation upper bounded by $5\cdot\lfloor2\sqrt{q}\rfloor+t+4$. We list some numerical results in Table \ref{tab:4}.
 \begin{table}[htbp]
        \centering
         \caption{$\substack{\text{Parameters of the second construction of sequences via the cyclic elliptic function fields}\\ \text{over finite fields with characteristic $p=3$}}$}    \begin{tabular}{|c|c|c|c|c|}
        \toprule
          Field Size   & Length & Family Size & Bound of Correlation &Actual Correlation\\
          \midrule
          81&91&80&103&46\\
          \hline
           243  &271 & 242&186& 78 \\
           \hline
           729&757 &728&302 &148\\
           \bottomrule
        \end{tabular}
        \label{tab:4}
    \end{table}    
For the detailed programming code, we list it on the webpage https://github.com/lxfhah-byte/binary-sequences-odd.
%{\color{red}As is what has been shown in the table above, we can see the sequences constructed above have large family sizes, long sequence lengths, low correlation and large linear complexities, which extends the constructions of Jin \textit{et al.} in \cite{3} and the parameters are more flexible.

% In fact, we can generate more families of binary sequences from the cyclic elliptic function fields with larger family sizes, flexible lengths $q+1+t$ and $t$ being an integer between $-\lfloor2\sqrt{q}\rfloor$ and $\lfloor2\sqrt{q}\rfloor$.    }  

\section{ Conclusions and Future Work}
\label{sec:4}
 In this paper, we extend the construction of binary sequences given by Jin \textit{et al.} in~\cite{3} via the cyclic elliptic function fields with characteristic $2$ to general odd characteristics by applying the quadratic residue map $\eta$ and the theory of Kummer extensions of function fields. Since our construction is over elliptic function fields with odd characteristics, it turns out that our construction has flexible parameters. The balance, correlation and linear complexity of the binary sequences are investigated.  Upper bounds for the balance, the correlation and a lower bound for the linear complexity are obtained. 
 From the numerical experiments, it is noticed that the real value of balance or correlation is much better than the upper bound given in Theorem~\ref{balance} or Theorem~\ref{8} respectively. For future work, it is interesting to 1) give better bounds for correlations and linear complexities of the sequences constructed in~\cite{3} and this paper by using other deep mathematical tools and 2) generalize the constructions of binary sequences by utilizing general algebraic function fields.


\begin{thebibliography}{}
    \bibitem{13}
 S. Boztas and P. V. Kumar, ``Binary sequences with gold-like correlation but larger linear span'', IEEE Trans. Inf. Theory, vol. 40, no. 2, pp. 532-537, Mar. 1994. 
 \bibitem{sage}
Developers, T. S., Stein, W., Joyner, D., Kohel, D., Cremona, J., Er\"ocal, B. (2020). SageMath, version 9.0. Retrieved from http://www.sagemath.org.
 \bibitem{16}
 G. Gong, ``New designs for signal sets with low cross correlation, balance property, and large linear span: $\mathrm{GF}(p)$ case '', IEEE Trans. Inf. Theory, vol. 48, no. 11, pp. 2847-2867, Nov. 2002.
 \bibitem{14}
S. Golomb, Shift Register Sequences. Kolympia, Greece: Aegean Park Press, 1967.
 \bibitem{10}
H. Hu, L. Hu and D. Feng, ``A new class of binary sequences with low correlation and large linear complexity from function fields," Proc. 2005 IEEE In. Symp. on Inf. Theory (ISIT), Adelaide, SA, 2005, pp. 1997-2001.
\bibitem{21}
H. Hu, L. Hu and D. Feng, ``On a class of Pseudorandom sequences from elliptic curves over finite fields," IEEE Trans. Inf. Theory, vol. 53, no. 7, pp. 2598-2605, Jul. 2007.
\bibitem{2}
		X. Hu, L. Jin, L. Ma and C. Xing, ``Binary sequences with a low correlation via cyclotomic function fields of odd characteristic," IEEE Trans. Inf. Theory, vol. 70, no. 2, pp. 1397-1407, Feb. 2024.
    \bibitem{23}
      L. Jin, D. Chen, L. Qian, J. Teng and S. Chen, ``Construction of binary sequences with low correlation via multiplicative quadratic character over finite fields of odd characteristics", IEEE Trans. Inf. Theory, vol. 67, no. 4, pp. 2236-2244, April 2021.
        \bibitem{9}
       L. Jin, L. Ma and C. Xing, ``Binary sequences with a low correlation via cyclotomic function fields," IEEE Trans. Inf. Theory, vol. 68, no. 5, pp. 3445-3454, May. 2022. 
\bibitem{3}
		L. Jin, L. Ma, C. Xing and R. Zhu, ``A new family of binary sequences with a low correlation via elliptic curves," IEEE Trans. Inf. Theory, vol. 71, no. 8, pp. 6470-6478, Aug. 2025. 
\bibitem{27}
T. Kasami, ``D-form sequences: Families of sequences with low correlation values and large linear spans," IEEE Trans. Inf. Theory, vol.41, no.2, pp. 423-431, Mar. 1995.
 \bibitem{19}
T. Kasami, ``Weight distribution formula for some class of cyclic codes", Coordinated Sci. Library, Univ. Illionis at Urbana, Champaign, IL, USA, Tech, Rep. R-285, 1996.
\bibitem{26}
A. Lempel, M. Cohn, and W. Eastman, ``A class of balanced binary sequences with optimal autocorrelation properties," IEEE Trans. Inf. Theory, vol. IT-23, no.1, pp. 38-42, Jan. 1977.
\bibitem{20}
        Y. Luo, C. Xing and L. You, ``Construction of sequences with high nonlinear complexity from function fields," IEEE Trans. Inf. Theory, vol. 63, no. 12, pp. 7646-7650, Dec. 2017.  
        \bibitem{4}
    	L. Ma and C. Xing, ``The group structures of automorphism groups of elliptic curves over finite fields and their applications to optimal locally repairable codes,'' J. Combinat. Theory A, vol. 193, Jan. 2023, Art. no. 105686.
   \bibitem{25}
       J.S. No and P.V. Kumar, ``A new family of binary pseudorandom sequences having optimal periodic correlation properties and larger linear span", IEEE Trans. Inf. Theory, vol.35, no.2, pp.371-379, Mar 1989.
        \bibitem{1}
		H. Niederreiter and C. Xing, Rational Points on Curves over Finite Fields: Theory and Applications, London Math. Soc. Lecture Note Ser., vol. 285, Cambridge Univ. Press, 2001.
 \bibitem{7}
        H. Niederreiter and C. Xing, ``Sequences with high nonlinear complexity,'' IEEE Trans. Inf. Theory, vol. 60, no. 10, pp. 6696-6701, Oct. 2014.
 \bibitem{29}
     K.G. Paterson,``Binary sequence sets with favorable correlations from difference sets and MDS codes", IEEE Trans. Inf. Theory, vol. 44, no.1, pp.172-180, Jan. 1998.
 \bibitem{8}
        H.-G. R$\ddot{\mathrm{u}}$ck, ``A note on elliptic curves over finite fields,'' Math. Comput., vol. 49, no. 179, pp. 301-304, 1987.
 \bibitem{6}
        J. Rushanan, ``Weil sequences: A family of binary sequences with good correlation properties,'' Proc. IEEE Int. Symp. Inf. Theory, pp.1648-1652, Jul. 2006.
 \bibitem{11}
H. Stichtenoth, Algebraic Function Fields and Codes, Grad. Texts Math., vol. 254, 2nd ed.
Berlin, Germany: Springer-Verlag, 2009.
\bibitem{17}
 X. Tang, P. Udaya, and P. Fan, ``Generalized binary Udaya-Siddiqi sequences '', IEEE Trans. Inf. Theory, vol. 53, no. 3, pp. 1225-1230, Mar. 2007. 
  \bibitem{18}
 X. Tang, T.Helleseth, L. Hu, and W. Jiang, ``Two new families of optimal binary sequences obtained from quaternary sequences '', IEEE Trans. Inf. Theory, vol. 55, no. 4, pp. 1833-1840, Apr. 2009.
 \bibitem{5}
       C. Xing, P. Kumar, and C. Ding, ``Low-correlation, large linear span sequences from function fields'', IEEE Trans. Inf. Theory, vol. 49, no. 6, pp. 1439-1446, Jun. 2003. 
\bibitem{12}
 C. Xing and K. Lam, ``Sequences with almost perfect linear complexity and curves over the finite fields'', IEEE Trans. Inf. Theory, vol. 45, no. 4, pp. 1267-1270, May 1999. 
\bibitem{15}
Z. Zhou and X. Tang, ``Generalized modified gold sequences", Des., Codes Cryptogr., vol. 60, no. 3, pp. 241-253, Sep. 2011. 
\end{thebibliography}
\end{document}